\newtheorem{theorem}{\textbf{Theorem}}
\newtheorem{definition}{\textbf{Definition}}
\newtheorem{lemma}{\textbf{Lemma}}
\DeclareMathAlphabet{\mathitbf}{OML}{cmm}{b}{it}
\renewcommand{\vec}{\mathitbf}
\renewcommand{\vec}[1]{\mbox{\boldmath$#1$}}
\begin{document}

\date{}

\title{Latent Capacity Region: A Case Study on Symmetric Broadcast With Common Messages}

\author{Chao Tian,~\IEEEmembership{Member,~IEEE}
\thanks{Chao Tian is with AT\&T Labs-Research, Florham Park, NJ 07932, USA (email: tian@research.att.com).}
}

\maketitle

\begin{abstract}
We consider the problem of broadcast with common messages, and focus on the case that the common message rate $R_{\mathcal{A}}$, i.e., the rate of the message intended for all the receivers in the set $\mathcal{A}$, is the same for all the set $\mathcal{A}$ of the same cardinality. Instead of attempting to characterize the capacity region of general broadcast channels, we only consider the structure of the capacity region that any broadcast channel should bear. The concept of latent capacity region is useful in capturing these underlying constraints, and we provide a complete characterization of the latent capacity region for the symmetric broadcast problem. The converse proof of this tight characterization relies on a deterministic broadcast channel model. The achievability proof generalizes the familiar rate transfer argument to include more involved erasure correction coding among messages, thus revealing an inherent connection between broadcast with common message and erasure correction codes.
\end{abstract}
\begin{keywords}
Broadcast channel, common message, individual message.
\end{keywords}

\section{Introduction}
\label{sec:intro}

One central theme in multi-user information theory (IT) is the pursuit of single-letter\footnote{The emphasis on single letter is largely because such kind of characterization is usually computable.} characterizations of the capacity regions for channel coding problems, or the achievable rate regions (possibly under certain distortion constraints) for source coding problems. However, some useful properties of these regions can be identified, e.g., convexity, even when a single-letter characterization is not available. An immediate question to ask is whether there exist other properties of the capacity region that do not rely on a single letter characterization. 

The following question is of interest in this regard: in a particular multi-user IT problem, can the achievability of a rate vector $(R^*_1,R^*_2,\ldots,R^*_N)$ imply the achievability of any rate vector in some region\footnote{Apparently the region defined by $R_i\leq R^*_i$ is implied in a channel coding problem, but this trivial case is not interesting. Note here we do not take the subscript of rate $R^*_i$ to have any specific meaning associated with the user indices, but merely as an integer label to enumerate the rates in question.} $\mathcal{R}(R^*_1,R^*_2,\ldots,R^*_N)$,  regardless of the exact probabilistic channel model? We show that indeed this is true for the symmetric broadcast problem, and this region can be rather non-trivial. We denote the largest of such regions $\mathcal{R}(R^*_1,R^*_2,\ldots,R^*_N)$ as $\mathcal{C}(R^*_1,R^*_2,\ldots,R^*_N)$ in a channel coding problem, and call it the \textit{latent capacity region} implied by $(R^*_1,R^*_2,\ldots,R^*_N)$; the \textit{latent achievable rate region} can be similarly defined, possibly under certain distortion constraints, for a source coding problem though it is not our main focus. 

For broadcast and multiple access channels, a precise problem formulation was given in a recent work by Grokop and Tse \cite{GrokopTse:08}, called {\em multicast region}, which provides a framework to answer the above question. Complete solutions were found in \cite{GrokopTse:08} for broadcast and multiple access channels with \textbf{two} and \textbf{three} users, but the problem remains open for more than three users. We believe this problem formulation reveals a more general concept not limited to only these two channels, and thus rename it as the latent capacity (or latent achievable rate) region problem to make explicit this generality. Our perspective is different from \cite{GrokopTse:08} in that we wish to highlight the importance of the latent capacity region concept in its ``maximum implication" meaning, and thus we shall define the region in an alternative (but equivalent) manner to emphasize this perspective; our interest in this problem is partially due to an observation made during an earlier work \cite{TianDiggavi:07}, as we shall discuss shortly.

One may wonder how a single achievable rate vector $(R^*_1,R^*_2,\ldots,R^*_N)$ can imply the achievability of a certain region. In some cases, it is perhaps best explained by the familiar rate transfer argument, that the rate to transmit common messages can be used to transmit individual messages instead, and vice versa. For example, for a two user broadcast channel, if a common message rate $R^*_{\{1,2\}}$, and individual message rates $R^*_{\{1\}}$ and $R^*_{\{2\}}$ are achievable, respectively, then it is not difficult to see that the region of $(R_{\{1,2\}},R_{\{1\}},R_{\{2\}})$ given below is achievable by transferring between common and individual rates (see also \cite{GrokopTse:08})
\begin{align*}
R_{\{1,2\}}+R_{\{1\}}&\leq R^*_{\{1,2\}}+R^*_{\{1\}}\\
R_{\{1,2\}}+R_{\{2\}}&\leq R^*_{\{1,2\}}+R^*_{\{2\}}\\
R_{\{1,2\}}+R_{\{1\}}+R_{\{2\}}&\leq R^*_{\{1,2\}}+R^*_{\{1\}}+R^*_{\{2\}}.
\end{align*}
However, for more than two users, such a naive rate transfer argument is not sufficient, and additional processing is needed, as observed in \cite{GrokopTse:08} for the three user case. In fact, this was exactly the perspective taken in \cite{GrokopTse:08}, where the goal is to exhaust all such rate transfer operations. The perspective taken in \cite{GrokopTse:08} and that taken here are complementary to each other, and one may suit certain problems better than the other. 
Because of this relation, it is not surprising that the achievability proof of our result also relies on a generalized version of rate transfer operations. We shall show that when more users are involved, such generalized rate transfer operation requires strategic application of erasure correction codes, which reveals an inherent connection between erasure correction codes and broadcast with common messages. More specifically, in this work, we shall largely stay in the framework of \cite{GrokopTse:08}, and provide a complete solution to the $K$-user broadcast channel latent capacity region problem under an additional symmetry constraint, whereas only cases with two and three users were solved in \cite{GrokopTse:08} without such a constraint. 

\begin{figure}[tb]
\begin{centering}
\includegraphics[width=8.5cm]{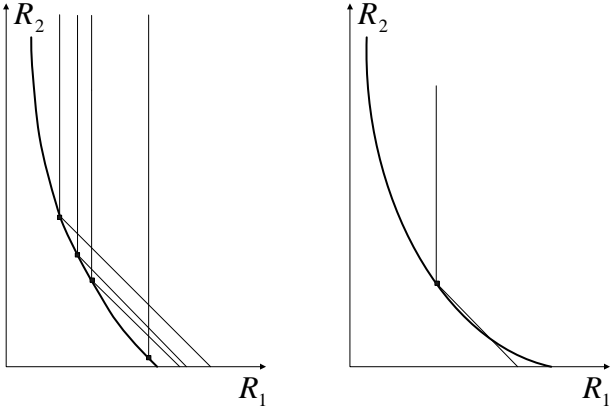}
\caption{The bold curve gives the rate region: while the left one is possible, the right one is  impossible for the successive refinement source coding problem. The thin lines give the latent capacity region associated with each small black dot.\label{fig:twoshapes}}
\end{centering}
\end{figure}

The characterization of latent capacity/rate region is important in multi-user IT for two reasons. First, it may facilitate finding a single-letter characterization or an approximate characterization. For example, a rate-distortion region characterization for the problem of  multi-stage successive refinement with degraded decoder side information was given in the form of bounds on sum-rates \cite{TianDiggavi:07} as
\begin{align}
&\sum_{i=1}^m R_i\geq \sum_{i=1}^m I(X;W_m|W_1,W_2,...,W_{m-1},Y_m),\nonumber\\
&\qquad\qquad\qquad\qquad\qquad\qquad 1\leq m \leq N. \notag
\end{align}
On the other hand, it seems impossible to establish directly the converse for a characterization in the form of bounds on each individual incremental rate \cite{SteinbergMerhav:04}, despite the fact that the two characterizations are equivalent \cite{TianDiggavi:07}. This is not a coincidence, and it is not difficult to show that the latent rate region for this problem has exactly the following form, assuming non-negativity of the rates
\begin{align}
&\sum_{i=1}^m R_i\geq \sum_{i=1}^m R^*_i, \quad 1\leq m \leq N.
\end{align}
Intuitively, when a rate vector $(R^*_1,R^*_2,\ldots,R^*_N)$ is achievable, its latent capacity/rate region gives the largest achievable region thus implied, i.e., maximally utilizes it, which may help simplify the representation of the region when taking the union over auxiliary random variables. Similarly, when an approximate characterization is needed, a good inner bound may be found by choosing one or several good (auxiliary coding) distributions in an information theoretic coding scheme which lead to one or several rate vectors, and then taking the convex hull of their latent rate regions. One simple example is in \cite{TianDiggavi:08}, where an approximate characterization for the side-information scalable source coding problem was given for general sources under the squared error distortion measure, and the inner bound approximation is exactly the latent capacity region implied by a single rate pair.

The second reason making this concept important is even if it does not lead to a single-letter characterization or an approximate characterization, it can still provide insights into the problem. One such example is that the capacity region can always be written as the (possibly uncountable) union of latent capacity regions, which places certain constraints on the geometry of the achievable region. For the above example of successive refinement source coding, we show in Fig. \ref{fig:twoshapes} a possible rate region on the left, and an impossible rate region on the right. The one on the right is impossible because the black dot is in the achievable region, thus the latent capacity region implied by it (given by the thin line) must be also in the region, which is not satisfied by the region depicted on the right. This important observation was also discussed in \cite{GrokopTse:08} (see Corollary 4.3), and we do not elaborate it further. Nevertheless, it is rather clear that the latent capacity region indeed provides fundamental and useful property of the rate region, in addition to the well-known convexity.

\section{Problem Definition and Preliminaries}

We first define the symmetric broadcast problem, and then introduce the notion of latent capacity region in this context.
 
In a general $K$-user broadcast channel, the conditional probability distribution is given as 
\begin{align}
p(y_1[1,2,\ldots],y_2[1,2,\ldots],...,y_K[1,2,\ldots]\big{|}x[1,2,\ldots])
\end{align}
where the index in the bracket $[1,2,...]$ is used to denote time; the random variables have alphabets $\mathscr{X},\mathscr{Y}_1,\ldots,\mathscr{Y}_2$, and the receivers are indexed as $1,2,\ldots,K$. 
The alphabets can be discrete or continuous, and the channel can be memoryless or otherwise; for our purpose, it is perhaps beneficial, though not necessary, to limit the attention to cases where the channel transition process is (block) stationary and ergodic.  
We use script letters to denote sets, and particularly, $\mathcal{A}$ and $\mathcal{B}$ are reserved for subsets of $\mathcal{I}_K=\{1,2,\ldots,K\}$, i.e.,
\begin{eqnarray}
\mathcal{A},\mathcal{B}\subseteq \{1,2,\ldots,K\}.
\end{eqnarray}
$|\mathcal{A}|$ is used to denote the cardinality of set $\mathcal{A}$. A length-$n$ vector $X[1,2,...,n]$ is sometimes written as $X^n$; for a $K$ dimensional vector $(R_1,R_2,...,R_K)$, we sometimes write it simply as $\vec{R}$. 

Let $\{W_{\mathcal{A}},\mathcal{A}\subseteq\mathcal{I}_K\}$ be $2^{K}$ mutually independent and uniformly distributed messages, where $W_{\mathcal{A}}$ is the message intended for all the receivers in the set $\mathcal{A}$; for notational convenience, we include $W_{\emptyset}$ but will assume it to be a constant.
For each $k=1,2,\ldots,K$, define the set of random variables
\begin{align}
\mathcal{W}_k=\{W_{\mathcal{A}}, \mathcal{A}:k\in \mathcal{A}\}.
\end{align}
Thus $\mathcal{W}_k$ is the collection of messages that the $k$-th receiver should decode. We also define the following set of random variables
\begin{align}
\overline{\mathcal{W}}_k=\{W_{\mathcal{A}},|\mathcal{A}|\geq k\}.
\end{align}
More specifically for $K=3$, we have
\begin{align}
\mathcal{W}_1&=\{W_1,W_{12},W_{13},W_{123}\}\notag\\
\mathcal{W}_2&=\{W_2,W_{12},W_{23},W_{123}\}\notag\\
\mathcal{W}_3&=\{W_3,W_{13},W_{23},W_{123}\}\notag\\
\overline{\mathcal{W}}_1&=\mathcal{W}_1\cup\mathcal{W}_2\cup\mathcal{W}_3\notag\\
\overline{\mathcal{W}}_2&=\{W_{12},W_{13},W_{23},W_{123}\}\notag\\
\overline{\mathcal{W}}_3&=\{W_{123}\},
\end{align}
where we have slightly abused the notation by writing, e.g., $W_{\{1\}}$ as $W_1$. The sets $\mathcal{X}^n_i$ and $\overline{\mathcal{X}}^n_i$ are defined similarly for length-$n$ random vectors.
In this work, we only consider the case that the rates of messages $W_{\mathcal{A}}$ are the same for all such messages where the set $\mathcal{A}$ has the same cardinality. More formally, the problem is defined as follows.
\begin{definition}
An $(n,R_1,R_2,\ldots,R_K)$ symmetric broadcast code consists of an encoder 
\begin{align}
f: \prod_{\mathcal{A}\subseteq \mathcal{I}_K}\mathcal{I}_{2^{nR_{|\mathcal{A}|}}}\rightarrow \mathscr{X}^n,
\end{align}
where $R_{\emptyset}\triangleq 0$ and $K$ decoders,
\begin{align}
g_{k}:\mathscr{Y}^n_k\rightarrow \prod_{\mathcal{A}:k\in \mathcal{A}}\mathcal{I}_{2^{nR_{|\mathcal{A}|}}},
\end{align}
resulting in the decoded messages at the $k$-th receiver $\{\hat{W}_{k,\mathcal{A}}:k\in \mathcal{A}\}$, and the decoding error probability of at least one message at one receiver
\begin{align}
P^{(n)}_e=\mbox{Pr}\left(\bigcup_{k=1}^K \bigcup_{\mathcal{A}:k\in \mathcal{A}}\{W_{\mathcal{A}}\neq\hat{W}_{k,\mathcal{A}}\}\right).
\end{align}
\end{definition}
\begin{definition}
A rate vector $\vec{R}$ is symmetrically achievable if there exists a sequence of 
 $(n,\vec{R})$ codes with $P^{(n)}_e\rightarrow 0$. The closure of the set of symmetrically achievable rate vectors is called the symmetric broadcast capacity region, denoted as $\mathcal{C}_{p(y_1,y_2,\ldots,y_K|x)}$, or simply as $\mathcal{C}_{p}$.
\end{definition}

Note that secrecy constraint is not considered in the definition. Next we define the latent capacity region for this problem. 
\begin{definition}
\label{definition:LCR}
For a given rate vector $\vec{R}^*$, the collection of rate vectors $\mathcal{R}(\vec{R}^*)$ is called the latent capacity region for symmetric broadcast implied by $\vec{R}^*$, denoted as $\mathcal{C}(\vec{R}^*)$, if the following two conditions are satisfied (\textsf{i}) For any broadcast channel,  $\vec{R}^*\in\mathcal{C}_{p}$ implies $\mathcal{R}(\vec{R}^*)\subseteq \mathcal{C}_{p}$;
(\textsf{ii}) There exists a set of channels $\{p_x\}$, such that $\vec{R}^*\in\mathcal{C}_{p_x}$ and $\mathcal{R}(\vec{R}^*)\supseteq \bigcap_{x}\mathcal{C}_{p_x}$.
\end{definition}

For the second condition, we essentially wish to find one particular channel such that $\mathcal{R}(\vec{R}^*)\supseteq \mathcal{C}_{p}$. However this does not quite serve the purpose since this channel might be difficult to realize, however it can always be approximated by a sequence of channels. The above definition is slightly different from the one in \cite{GrokopTse:08}, which is
\begin{align}
\mathcal{C}(\vec{R}^*)=\bigcap_{p:\mathitbf{R}^*\in \mathcal{C}_p}\mathcal{C}_p.
\end{align}
It can be easily verified that they are equivalent. The problem we wish to solve is the characterization of $\mathcal{C}(\vec{R}^*)$. It is clear that the region $\mathcal{C}(\vec{R}^*)$ is uniquely defined for any $\vec{R}^*$, and thus the problem is meaningful. 

Definition \ref{definition:LCR} makes clear the ``maximal implication" meaning of the latent capacity region. In multi-user IT, usually a coding scheme is given by fixing some auxiliary random variables, and then showing a single rate vector is achievable with certain random codes; the task of maximizing the implication region of this single point is sometimes mingled with the conditions under which this single point is achievable. The concept of latent capacity region can be used to delineate them. 

The following lemma is needed in the converse proof. 
\begin{lemma}[$K$-way submodularity]
\label{lemma:Kwaysubmodularity}
Let $\{U_i,i=1,2,\ldots,N\}$ be a set of mutually independent random variables, and $\{V_i,i=1,2,\ldots,N\}$ be a set of random variables jointly distributed with it. Let $\mathcal{G}_i$, $i=1,2,\ldots,K$ be subsets of $\mathcal{I}_N$.
Then
\begin{align}
\sum_{k=1}^KH(V_i,i\in\mathcal{G}_k|U_i,i\in\mathcal{G}_k)\geq \sum_{k=1}^KH(V_i,i\in\widehat{\mathcal{G}}_k|U_i,i\in\widehat{\mathcal{G}}_k),
\end{align}
where
\begin{align}
\widehat{\mathcal{G}}_{k}\triangleq \bigcup_{\{j_1,j_2,\ldots,j_k\}\subseteq \mathcal{I}_K} (\mathcal{G}_{j_1}\cap\mathcal{G}_{j_2}\cap\cdots\cap\mathcal{G}_{j_k}).
\end{align}
\end{lemma}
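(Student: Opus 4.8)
The plan is to deduce the $K$-way inequality from its two-set special case and then iterate. For $\mathcal{S}\subseteq\mathcal{I}_N$ write $F(\mathcal{S})\triangleq H(V_i,i\in\mathcal{S}\mid U_i,i\in\mathcal{S})$; the goal is $\sum_{k}F(\mathcal{G}_k)\geq\sum_{k}F(\widehat{\mathcal{G}}_k)$. First I would establish the pairwise submodularity
\[
F(\mathcal{A})+F(\mathcal{B})\geq F(\mathcal{A}\cup\mathcal{B})+F(\mathcal{A}\cap\mathcal{B}),\qquad \mathcal{A},\mathcal{B}\subseteq\mathcal{I}_N.
\]
The key point is that, because the $U_i$ are mutually independent, $F(\mathcal{S})=H(\{V_i,U_i\}_{i\in\mathcal{S}})-\sum_{i\in\mathcal{S}}H(U_i)$, and since the indices of $\mathcal{A}$ together with those of $\mathcal{B}$ are, counted with multiplicity, exactly those of $\mathcal{A}\cup\mathcal{B}$ together with those of $\mathcal{A}\cap\mathcal{B}$, the $\sum H(U_i)$ terms cancel when the inequality is written out. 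What remains is the ordinary submodularity of the joint entropy of the pairs $Z_i\triangleq(V_i,U_i)$, i.e. $H(Z_{\mathcal{A}})+H(Z_{\mathcal{B}})-H(Z_{\mathcal{A}\cup\mathcal{B}})-H(Z_{\mathcal{A}\cap\mathcal{B}})=I(Z_{\mathcal{A}\setminus\mathcal{B}};Z_{\mathcal{B}\setminus\mathcal{A}}\mid Z_{\mathcal{A}\cap\mathcal{B}})\geq 0$, which is standard.

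Next I would run a ``pairwise compression'' on the tuple $(\mathcal{G}_1,\ldots,\mathcal{G}_K)$: as long as some two members $\mathcal{G}_a,\mathcal{G}_b$ are incomparable (neither contained in the other), replace the pair by $(\mathcal{G}_a\cup\mathcal{G}_b,\ \mathcal{G}_a\cap\mathcal{G}_b)$. By the pairwise inequality the sum $\sum_k F(\mathcal{G}_k)$ never increases under such a step, and the number of members of the tuple that contain any fixed index $i$ is preserved. To see the process halts, I would track the integer potential $\sum_k|\mathcal{G}_k|^2$: one has $|\mathcal{A}\cup\mathcal{B}|+|\mathcal{A}\cap\mathcal{B}|=|\mathcal{A}|+|\mathcal{B}|$, while for an incomparable pair $|\mathcal{A}\cap\mathcal{B}|$ is strictly smaller and $|\mathcal{A}\cup\mathcal{B}|$ strictly larger than both $|\mathcal{A}|$ and $|\mathcal{B}|$; strict convexity of $x\mapsto x^2$ then forces the potential to increase by at least one, and since it is bounded by $KN^2$ only finitely many steps can occur.

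Finally I would identify the terminal tuple. When no incomparable pair is left, the members are totally ordered by inclusion, say $\mathcal{H}_1\supseteq\mathcal{H}_2\supseteq\cdots\supseteq\mathcal{H}_K$ after relabeling (which does not change $\sum_k F(\mathcal{H}_k)$). Let $m_i$ be the number of original sets $\mathcal{G}_j$ that contain the index $i$; this count is invariant under the compression, so in the chain $i$ lies in exactly the top $m_i$ members, giving $\mathcal{H}_k=\{i:m_i\geq k\}$. On the other hand $i\in\widehat{\mathcal{G}}_k$ precisely when $i$ lies in at least $k$ of the $\mathcal{G}_j$, i.e. $\widehat{\mathcal{G}}_k=\{i:m_i\geq k\}$; hence $\mathcal{H}_k=\widehat{\mathcal{G}}_k$ for all $k$, and chaining the inequalities from the successive compression steps yields $\sum_k F(\mathcal{G}_k)\geq\sum_k F(\mathcal{H}_k)=\sum_k F(\widehat{\mathcal{G}}_k)$. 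I expect the only genuinely delicate part to be this combinatorial bookkeeping — checking that iterated pairwise $(\cup,\cap)$ replacement converges exactly to the rearranged chain $\{\widehat{\mathcal{G}}_k\}$ and that the monovariant is strictly decreasing — while the information-theoretic content is confined to the one-line pairwise estimate above.
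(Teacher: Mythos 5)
Your proof is correct, and it differs from the paper's in one meaningful way. The paper's own argument has two components: (1) a proof (Appendix~\ref{append:Kwaysubmodularity}, Lemma~\ref{lemma:submodular}) that $F(\mathcal{S})\triangleq H(V_i,i\in\mathcal{S}\mid U_i,i\in\mathcal{S})$ is submodular, which relies exactly on your observation that $F(\mathcal{S})=H(Z_{\mathcal{S}})-\sum_{i\in\mathcal{S}}H(U_i)$ with $Z_i=(V_i,U_i)$, the independence of the $U_i$ making the second term modular; and (2) a direct citation to Harvey, Kleinberg, and Lehman \cite{Harvey:06} for the fact that any submodular function satisfies the $K$-way ``uncrossing'' inequality $\sum_k F(\mathcal{G}_k)\geq\sum_k F(\widehat{\mathcal{G}}_k)$. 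You reproduce component (1) essentially verbatim, but you replace the citation in (2) with a self-contained compression argument: repeatedly uncross incomparable pairs, use the potential $\sum_k|\mathcal{G}_k|^2$ (which strictly increases under uncrossing because the multiset of cardinalities spreads while its sum is fixed, and $x\mapsto x^2$ is strictly convex) to prove termination, and identify the terminal chain with $\{\widehat{\mathcal{G}}_k\}$ via the multiplicity counts $m_i$. That identification is right: $i\in\widehat{\mathcal{G}}_k$ iff $i$ lies in at least $k$ of the $\mathcal{G}_j$, which is invariant under uncrossing, and a laminar (here, totally ordered) family with prescribed multiplicities is uniquely the chain $\{i:m_i\geq k\}$. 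The net effect is that your argument is more elementary and self-contained; the paper's is shorter by outsourcing the combinatorics. One small slip in your closing sentence: you call the monovariant ``strictly decreasing,'' but as you correctly argued earlier, the potential $\sum_k|\mathcal{G}_k|^2$ is strictly \emph{increasing} (while $\sum_k F(\mathcal{G}_k)$ is merely non-increasing); this does not affect the validity of the argument.
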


This lemma is a direct consequence of the sub-modularity of the conditional entropy function, when the random variables being conditioned on are independent (a proof is given in Appendix \ref{append:Kwaysubmodularity}), and the $K$-way submodularity property of any submodular function given in  \cite{Harvey:06}.

\section{Main Result}

Our main result is a complete characterization of the latent capacity region for the symmetric broadcast problem. To present this region, a few more quantities need to be defined first. Let us define the following up-exchange rate for $i<j$ 
\begin{align}
\phi_{i,j}=\binom{K-i}{j-i}^{-1}\binom{j-1}{j-i},
\end{align}
and the down-exchange rate for $i>j$
\begin{align}
\phi_{i,j}=\binom{i}{i-j}^{-1}\binom{K-j}{i-j},
\end{align}
and define $\phi_{i,i}=1$. The up/down exchange rates $\phi_{i,j}$ essentially describe the ratio when converting certain type of messages into other types. For example when $K=3$, the common message $W_{123}$ can be used to convey individual information to the three users, and vice versa, but the conversion of such rates is not always ratio one. 
It will become clear in the achievable proof how such conversion can be done in a most efficient manner. 

Define $\mathcal{C}^*(\vec{R}^*)$ to be the set of rate vectors $\vec{R}$ satisfying the following conditions with some $K^2$ non-negative quantities $r_{i,j}$, $(i,j)\in \mathcal{I}_K\times\mathcal{I}_K$,
\begin{align}
R^*_i&\geq \sum_{j=1}^K r_{i,j},\quad i=1,2,\ldots,K,\label{eqn:condition1}\\
0\leq R_j&\leq \sum_{i=1}^K \phi_{i,j}r_{i,j},\quad j= 1,2,\ldots, K.\label{eqn:condition2}
\end{align}

Roughly speaking, the rate $r_{i,j}$ is that taken from level-$i$ rate $R^*_i$ but used to transmit level-$j$ messages. We have the following theorem.
\begin{theorem}
\label{theorem:maintheorem}
For any non-negative rate vectors $(R^*_1,R^*_2,\ldots,R^*_K)$, we have
\begin{align}
\mathcal{C}(R^*_1,R^*_2,\ldots,R^*_K)=\mathcal{C}^*(R^*_1,R^*_2,\ldots,R^*_K).
\end{align}
\end{theorem}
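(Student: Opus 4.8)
The plan is to prove $\mathcal{C}(\vec{R}^*) = \mathcal{C}^*(\vec{R}^*)$ by establishing two inclusions: a converse showing $\mathcal{C}(\vec{R}^*) \subseteq \mathcal{C}^*(\vec{R}^*)$, which means exhibiting a channel (or sequence of channels) for which $\vec{R}^*$ is achievable but no rate vector outside $\mathcal{C}^*(\vec{R}^*)$ is; and an achievability direction showing that for \emph{every} channel, $\vec{R}^* \in \mathcal{C}_p$ forces $\mathcal{C}^*(\vec{R}^*) \subseteq \mathcal{C}_p$.

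For the achievability part, I would take an arbitrary channel with $\vec{R}^* \in \mathcal{C}_p$ and an arbitrary $\vec{R} \in \mathcal{C}^*(\vec{R}^*)$, witnessed by nonnegative quantities $r_{i,j}$ satisfying \eqref{eqn:condition1}--\eqref{eqn:condition2}. The idea is to reuse a good code for $\vec{R}^*$, but repurpose message content: of the level-$i$ message budget $R^*_i$, we carve out a sub-rate $r_{i,j}$ to carry level-$j$ information. The content to be carried at level $j$ is encoded with an erasure-correcting (MDS-like) code and the coded symbols are distributed among the level-$i$ messages (over all $\binom{K}{i}$ subsets of size $i$); the combinatorial factor $\phi_{i,j}$ is exactly the efficiency of this redistribution — it counts, via the binomial identities defining it, how many size-$j$ subsets are ``covered'' when placing coded pieces on size-$i$ subsets, so that every receiver in a target size-$j$ set sees enough coded symbols to recover the level-$j$ payload. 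I would verify for $i<j$ (up-exchange, spreading a higher-redundancy message into lower-cardinality slots) and $i>j$ (down-exchange) separately that the counting works out and that MDS codes of the appropriate length suffice, then sum contributions over $i$ to get total deliverable level-$j$ rate $\sum_i \phi_{i,j} r_{i,j} \geq R_j$. This strictly generalizes the two-user rate-transfer argument quoted in the introduction and the three-user argument of \cite{GrokopTse:08}.

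For the converse, I would use the deterministic broadcast channel mentioned in the abstract: construct a channel in which receiver $k$ observes a deterministic function of the input such that the ``level structure'' is laid bare, with $\vec{R}^*$ sitting on the boundary. Given any symmetrically achievable $\vec{R}$ for this channel, I would apply Fano's inequality and then invoke Lemma~\ref{lemma:Kwaysubmodularity} ($K$-way submodularity) to the entropies of the received signals, with the $U_i$ playing the role of the mutually independent messages and $V_i$ the channel outputs; the $\widehat{\mathcal{G}}_k$ operation corresponds precisely to the "which messages are jointly decodable by how many receivers" bookkeeping. This should yield inequalities that, after identifying $r_{i,j}$ with the appropriate conditional-entropy increments (rate of level-$i$ budget consumed producing decodability at $j$ receivers), are exactly \eqref{eqn:condition1}--\eqref{eqn:condition2}, so $\vec{R} \in \mathcal{C}^*(\vec{R}^*)$.

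The main obstacle I anticipate is the achievability counting: making precise the erasure-code construction so that the exchange rate $\phi_{i,j}$ is simultaneously achieved for all the $r_{i,j}$ at once (the redistributions at different levels share the same physical message slots and must not collide), and handling the up- versus down-exchange cases with the correct MDS code parameters and block-length/divisibility caveats. A secondary difficulty is pinning down the deterministic channel in the converse so that Lemma~\ref{lemma:Kwaysubmodularity} applies cleanly and the resulting bounds match $\mathcal{C}^*$ exactly rather than merely an outer bound — in particular ensuring the second condition of Definition~\ref{definition:LCR} (the existence of a channel whose capacity region is contained in $\mathcal{C}^*(\vec{R}^*)$, up to approximation by a sequence of channels) is genuinely met and not just a one-sided relaxation.
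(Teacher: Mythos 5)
Your achievability direction matches the paper's forward proof closely: reuse a good code for $\vec R^*$, carve each $R^*_i$ into sub-rates $r_{i,j}$, and redistribute level-$j$ payload across all size-$i$ subsets using a $\bigl(\binom{j}{i},\binom{j-1}{i-1}\bigr)$ MDS code for the up-exchange case (and a plain uniform split for down-exchange), so that each exchange achieves rate $\phi_{i,j} r_{i,j}$. Your anticipated "collision" worry is not actually an obstacle: the exchanges are done pairwise and independently on disjoint slices of each $R^*_i$, precisely because condition~\eqref{eqn:condition1} reserves disjoint sub-budgets $r_{i,1},\ldots,r_{i,K}$ summing to at most $R^*_i$.

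The converse is where your sketch departs from what the paper does and where the real gap lies. You propose to take an achievable $\vec R$ on the deterministic channel, apply Fano and Lemma~\ref{lemma:Kwaysubmodularity}, and then "identify $r_{i,j}$ with the appropriate conditional-entropy increments" to directly exhibit a witness for membership in $\mathcal{C}^*(\vec R^*)$. This direct-witness strategy is not what the paper does and it is not clear it can be made to work: the region $\mathcal{C}^*$ is defined by an existential quantifier over $K^2$ nonnegative variables, and there is no natural bijection between those variables and entropy terms in a Fano chain. Instead, the paper takes a dual route: since both $\mathcal{C}^*(\vec R^*)$ and $\mathcal{C}_{p^*}$ are convex, it suffices to prove $B_{\mathcal{C}^*}(\vec A)\geq B_{\mathcal{C}}(\vec A)$ for every nonnegative weight vector $\vec A$ (Theorem~\ref{theorem:boundingplane}), i.e., to dominate every supporting hyperplane. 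To control the left side, the paper first proves structural results about optimizers of the LP $\max_{\vec R\in\mathcal{C}^*}\sum A_k R_k$ — the existence of an \emph{extremal solution}, the partition $\mathcal{S}_1,\ldots,\mathcal{S}_{|\mathcal{E}|}$ of $\{1,\ldots,K\}$, the effective rate set $\mathcal{E}$ (Lemmas~\ref{lemma:extremal}, \ref{lemma:quantization}) — and uses this structure to choose coefficients $a_{k,j}, b_{k,j}$ and to organize a two-layer induction over entropy inequalities. The properties of $\phi_{i,j}$ in Lemmas~\ref{lemma:increasingijk}--\ref{lemma:alternativephijk} are indispensable at many steps of that cancellation. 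Without this dual/LP machinery, the Fano + submodularity step by itself only gives an outer bound; matching it to $\mathcal{C}^*$ requires knowing \emph{which} combination of entropy inequalities to sum, and that knowledge comes from the extremal-solution structure. You flagged this risk yourself ("merely an outer bound"), and it is exactly the point your sketch does not resolve.
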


\textit{Example:} for $K=2$, it is straightforward to see: $\phi_{1,2}=1$, i.e., the same amount of individual message rate for each user can be used to transmit a common message; and $\phi_{2,1}=1/2$, i.e., to split a common message into two equal parts, each to transmit a separate individual message for one user.  

\textit{Example:} for $K=3$, it can be verified using Fourier-Motzkin elimination \cite{Ziegler} that $\mathcal{C}^*(R^*_1,R^*_1,\ldots,R^*_K)$ is given by the non-negative rates satisfying
\begin{align}
3R_1+6R_2+2R_3\leq 3R^*_1+6R^*_2+2R^*_3,\notag\\
2R_1+2R_2+1R_3\leq 2R^*_1+2R^*_2+1R^*_3,\notag\\
1R_1+2R_2+1R_3\leq 1R^*_1+2R^*_2+1R^*_3,\notag\\
3R_1+3R_2+1R_3\leq 3R^*_1+3R^*_2+1R^*_3.
\end{align}
A typical shape is given in Fig. \ref{fig:example122} with $(R_1^*,R^*_2,R^*_3)=(1,2,2)$.
The computation is tedious and thus omitted here. The same result can also be reduced from that given in \cite{GrokopTse:08} for the asymmetric case. It is clear that this region is non-trivial, and it is not at all clear a priori why these rate combinations should be considered.

In \cite{GrokopTse:08}, the region is characterized by investigating the distinct universal encoding/decoding operations, which leads to the concept of extremal rays. Because the latent capacity region in question is a polytope, it can be characterized by its faces, edges, or vertices. The extremal rays are essentially the edges of this polytope. However this proof approach in \cite{GrokopTse:08} appears rather difficult to generalize for more than three users since the number of edges quickly becomes very large, and thus we introduce the parametric characterization (\ref{eqn:condition1}) and (\ref{eqn:condition2}) to avoid this difficulty.

\begin{figure}[tb]
\begin{centering}
\includegraphics[width=8.5cm]{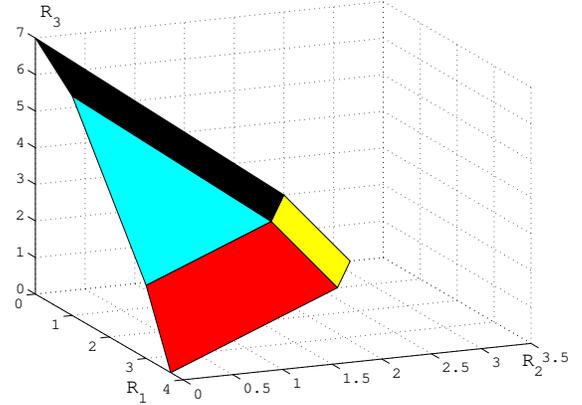}
\caption{The latent capacity region implied by rate vector $(1,2,2)$.\label{fig:example122}}
\end{centering}
\end{figure}

Notice that the exchange rate is pairwise, suggesting in this symmetric setting there is no need to convert rates jointly, e.g., use $W_{12}$ and $W_3$ to send the same message $W_{123}$. In the rest of the paper, we shall prove Theorem \ref{theorem:maintheorem}. The naive approach of finding the planes of the rate region and derive its upper and lower bounds is not appropriate for general $K$, particularly for the purpose of converse. Instead, we utilize the structure of the region $\mathcal{C}^*(\vec{R}^*)$ to give a proof.

\section{Proof of the Forward Part For Theorem \ref{theorem:maintheorem}}

The proof of the forward part of Theorem \ref{theorem:maintheorem}, i.e., the fact that $\mathcal{C}^*(\vec{R}^*)$ satisfies the first condition in Definition \ref{definition:LCR} is relatively straightforward. 

\begin{proof}[Proof of forward part for Theorem \ref{theorem:maintheorem}]
Since $(\vec{R}^*)$ is achievable on any channel, there exists a sequence of codes with such rates with $P^{(n)}_e\rightarrow 0$, and we will use these codes to construct a set of codes to approach any rate vectors in $\mathcal{C}^*(\vec{R}^*)$. This is done by essentially relabeling and adding erasure correction codes on the messages.  

Observe that the messages $\{W_{\mathcal{A}},|\mathcal{A}|=i\}$ can also be used to transmit common messages to the subsets with cardinality smaller or larger than $i$. Moreover, we can use part of the rate $R^*_i$, denoted as $r_{i,j}$, for this purpose, to transmit some messages $\{W'_{\mathcal{A}},|\mathcal{A}|=j\}$, thus increasing $R_j$. Such an operation will cause a conversion of rate $r_{i,j}$ for $i$-user subset messages into rate $\phi_{i,j}r_{i,j}$ for $j$-user subset messages, with an exchange rate $\phi_{i,j}$. The region $\mathcal{C}^*(\vec{R}^*)$ is precisely the result of allowing this kind of pairwise exchange on the rate vector $\vec{R}^*$. Thus we only need to show that the exchange rates $\phi_{i,j}$ given before Theorem \ref{theorem:maintheorem} is indeed valid, then the existing sequence of channel codes can be used directly. 

It is clear that we only need to consider the following problem: on a channel with $R_i=R$ and $R_k=0$ for $k\neq i$, how do we transmit messages $\{W_{\mathcal{A}},|\mathcal{A}|=j\}$, and how much rate $R_j$ can be supported? We will only need to distinguish two cases $i<j$ or $i>j$, since it is clear $\phi_{i,i}=1$. 

We first consider the case $i<j$. For a subset $\mathcal{B}$ of $\mathcal{I}_K$ where $|\mathcal{B}|=j$, there are a total of $\binom{j}{i}$ subset of $\mathcal{B}$ with cardinality $i$; denote the collection of such subsets as $2^{\mathcal{B},i}$. For a particular user $k\in \mathcal{B}$, it can decode (with high probability) the messages $\{W_\mathcal{A}:k\in \mathcal{A}\subset \mathcal{B}\}$, i.e., $\binom{j-1}{i-1}$ such messages. To transmit the common message $W_\mathcal{B}$, if we can guarantee that when receiving any $\binom{j-1}{i-1}$ messages out of the $\binom{j}{i}$ messages in the set $2^{\mathcal{B},i}$, the message is decodable, then it is clear that indeed any receivers in the set $\mathcal{B}$ can decode the message $W_\mathcal{B}$. This is an erasure correction problem and a $\left(\binom{j}{i},\binom{j-1}{i-1}\right)$ maximum distance separable (MDS) code can satisfy this requirement, which indeed exists when the codeword length is sufficiently large. Furthermore, since each subset $\mathcal{A}$ of cardinality $i$ is a subset of $\binom{K-i}{j-i}$ sets of cardinality $j$, only $\binom{K-i}{j-i}^{-1}$ of the rate $R_{\mathcal{A}}$ can be used for each MDS code. This yields 
\begin{align}
R_j=\binom{K-i}{j-i}^{-1}\binom{j-1}{i-1} R_i=\phi_{i,j}R_i.
\end{align}

Next consider the case $i>j$. Let $\mathcal{B}$ be a subset of $\mathcal{I}_K$ where $|\mathcal{B}|=i$. The common message $\mathcal{B}$ can be shared uniformly between its $\binom{i}{j}$ subsets of cardinality $j$, for transmitting their ``individual" message. Since each subset $\mathcal{A}$ of cardinality $j$ is a subset of distinct $\binom{K-j}{i-j}$ sets of cardinality $i$, it can take part in such sharing $\binom{K-j}{i-j}$ times. This yields
\begin{align}
R_j=\binom{i}{j}^{-1}\binom{K-j}{i-j}R_i=\phi_{i,j}R_i.
\end{align}
Taking into account the existence of good MDS code, and the fact that $\mathcal{C}_p$ is a closed set, the proof is complete. 
\end{proof}

In \cite{GrokopTse:08}, it was observed that in order to efficiently transfer rates, sometimes a modulo two addition is needed, similar to that seen in butterfly network of network coding \cite{Raymond:2000}. The MDS codes we use in the above proof can be understood as a generalization of the modulo two addition, which itself is essentially a $(3,2)$ MDS code. It is worth noting that other coding/processing may also be useful for converting rates, however, MDS codes are sufficient in solving the symmetric broadcast problem.

\section{Proof of the Converse Part For Theorem \ref{theorem:maintheorem}}
\label{sec:converse}
The converse proof of Theorem \ref{theorem:maintheorem} requires more work. For simplicity we shall assume $2^{R^*_{\mathcal{A}}}$'s are all integers; if this is not the case, a sequence of channels need to be considered, and we shall return to this technical point after the proof.

We only need to provide one particular channel that $\vec{R}^*\in\mathcal{C}_{p}$ and  $\mathcal{R}(\vec{R}^*)\supseteq \mathcal{C}_{p}$. The channel is the deterministic one considered in \cite{GrokopTse:08}, extended to the $K$-user case; see Fig. \ref{fig:deterministic} for the case $K=3$. More precisely, let the channel input be the collection of $\{X_{\mathcal{A}},\mathcal{A}\subseteq \mathcal{I}_K\}$. The alphabet of $X_{\mathcal{A}}$ where $|\mathcal{A}|=k$ is $\mathcal{I}_{2^{R^*_k}}$. The $k$-th channel output $Y_k$ is given by
\begin{align}
Y_k=\{X_{A}:k\in \mathcal{A}\}.
\end{align}
Denote this deterministic channel as $p^*$. 
In order to prove the converse part for Theorem \ref{theorem:maintheorem}, we need to establish $\mathcal{C}^*(\vec{R}^*)\supseteq \mathcal{C}_{p^*}$ for this channel. 

For any $\vec{A}=A_1,A_2,\ldots,A_K$ where $A_i\geq 0$, define the following quantity
\begin{align}
\label{eqn:maximizingBstar}
B_{\mathcal{C}^*}(\vec{A})=\max_{\mathitbf{R}\in \mathcal{C}^*(\mathitbf{R}^*)} \sum_{k=1}^K A_kR_k,
\end{align}
and similarly 
\begin{align}
\label{eqn:maximizingB}
B_{\mathcal{C}}(\vec{A})=\max_{\mathitbf{R}\in \mathcal{C}_{p^*}} \sum_{k=1}^K A_kR_k.
\end{align}
It is clear that both  $\mathcal{C}^*(\vec{R}^*)$ and $\mathcal{C}_{p^*}$ are convex regions, and thus if we can prove the following theorem, then the converse of Theorem \ref{theorem:maintheorem} directly follows. 
\begin{theorem}
\label{theorem:boundingplane}
For any $\vec{A}$ where $A_i\geq 0$,
\begin{align}
B_{\mathcal{C}^*}(\vec{A})\geq B_{\mathcal{C}}(\vec{A}).\label{eqn:boundingplane}
\end{align}
\end{theorem}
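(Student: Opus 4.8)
The plan is to bound $B_{\mathcal{C}}(\vec{A})$ — the maximum weighted sum-rate achievable on the deterministic channel $p^*$ — from above by a quantity that we then recognize as an achievable weighted sum-rate on $\mathcal{C}^*(\vec{R}^*)$, i.e. a lower bound on $B_{\mathcal{C}^*}(\vec{A})$. First I would take an arbitrary rate vector $\vec{R}\in\mathcal{C}_{p^*}$, so that there is a sequence of $(n,\vec{R})$ codes on $p^*$ with $P^{(n)}_e\to 0$. Because the channel is deterministic with $Y_k=\{X_{\mathcal{A}}:k\in\mathcal{A}\}$, Fano's inequality gives, for each receiver $k$, that $nR_k^{(\text{rx }k)}:=H(\mathcal{W}_k)$ is essentially bounded by $H(Y_k^n)\le\sum_{\mathcal{A}:k\in\mathcal{A}}H(X_{\mathcal{A}}^n)\le n\sum_{\mathcal{A}:k\in\mathcal{A}}R^*_{|\mathcal{A}|}$ (the last step using the alphabet sizes), plus $o(n)$ terms. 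The symmetric rate structure means $H(\mathcal{W}_k)=n\sum_{\mathcal{A}:k\in\mathcal{A}}R_{|\mathcal{A}|}$, but the real content is the joint/submodular structure across receivers, not the single-receiver bounds.

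The key step is to bring in Lemma~\ref{lemma:Kwaysubmodularity}. The idea is: for a coefficient vector $\vec{A}$ with nonnegative integer entries (handling general nonnegative $\vec{A}$ afterward by density/scaling, since both regions are closed and positively homogeneous in the obvious sense), I want to write $\sum_k A_k R_k$ as a sum of conditional entropies of the form appearing on the left-hand side of the $K$-way submodularity inequality, with the $U_i$'s being (subsets of) the message variables that get conditioned away and the $V_i$'s being the decoded message bundles $\mathcal{W}_k$ (or the channel outputs $Y_k$). Applying $K$-way submodularity then ``symmetrizes'' the bound: the sets $\widehat{\mathcal{G}}_k$ obtained by taking $k$-fold intersections over the index collections are exactly what produce the level-structure in $\mathcal{C}^*$. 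Concretely, for each weight level one peels off a layer: $A_k$ copies of receiver $k$ contribute, and the $k$-fold-intersection operation on the receiver index sets converts a message-rate budget at level $i$ into a contribution at level $j$ with precisely the combinatorial ratio $\phi_{i,j}=\binom{K-i}{j-i}^{-1}\binom{j-1}{j-i}$ (resp. the down-exchange formula). After bounding, one reads off nonnegative quantities $r_{i,j}$ — the amount of the level-$i$ budget $R^*_i$ routed to level $j$ — satisfying $\sum_j r_{i,j}\le R^*_i$ and $\sum_i \phi_{i,j}r_{i,j}\ge$ (the coefficient of $R_j$ in the weighted sum), which is exactly membership of the maximizer in $\mathcal{C}^*(\vec{R}^*)$ per conditions (\ref{eqn:condition1})--(\ref{eqn:condition2}). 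Hence $\sum_k A_k R_k\le B_{\mathcal{C}^*}(\vec{A})$, and taking the sup over $\vec{R}\in\mathcal{C}_{p^*}$ gives (\ref{eqn:boundingplane}).

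The main obstacle I anticipate is the \emph{bookkeeping that identifies the $K$-way submodularity output with the $\phi_{i,j}$ ratios}. It is not at once obvious how to choose, for a given $\vec{A}$, the index sets $\mathcal{G}_k$ (there will be $\sum_k A_k$ of them, $A_k$ copies associated with receiver $k$) and the partition of message variables into the $U$-part and $V$-part so that the $k$-fold-intersection sets $\widehat{\mathcal{G}}_k$ track the cardinality levels correctly. I expect this requires an inductive ``layer-stripping'' argument: order the levels $k=K,K-1,\dots,1$; at each level use submodularity once to move rate between the current level and adjacent ones, each move incurring exactly the $\phi_{i,j}$ discount, accumulating the $r_{i,j}$'s as we go; then verify the two families of linear constraints hold with these $r_{i,j}$'s. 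A secondary technical point is the reduction from integer $2^{R^*_{\mathcal{A}}}$ and integer $\vec{A}$ to the general case — but since both $\mathcal{C}^*$ and $\mathcal{C}_{p^*}$ are closed and behave continuously under approximating channels (as the paper already notes for the non-integer alphabet case), this is routine and I would defer it to a remark after the proof.
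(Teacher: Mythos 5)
You have the right tools on the table — the deterministic channel, Fano's inequality, Lemma~\ref{lemma:Kwaysubmodularity}, and an inductive layer-stripping — and you correctly identify that the heart of the matter is matching the $K$-way submodularity output to the $\phi_{i,j}$ ratios. But that ``bookkeeping'' you defer is not routine; it is the proof. The paper resolves it by first solving the linear program (\ref{eqn:maximizingBstar}) structurally: it invokes Lemma~\ref{lemma:extremal} to pick an \emph{optimal extremal} $\{r_{i,j}\}$, and then Lemma~\ref{lemma:quantization} to extract the effective rate set $\mathcal{E}$ and the consecutive-interval partition $\mathcal{S}_1,\dots,\mathcal{S}_{|\mathcal{E}|}$. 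These objects supply exactly the data you are missing: they determine the coefficients $a_{k,j}$, $b_{k,j}$ (which involve both the $\phi$'s, the binomial normalizations $\binom{K-1}{j-1}^{-1}$, and crucially the ratios $A_{e_{k+1}}/A_{e_k}$), and the optimality/extremality of the solution is precisely what guarantees those coefficients are nonnegative — without which the entropy-inequality chain collapses. Your proposal never mentions extremal solutions, the effective rate set, or the partition, so there is no mechanism in it to decide which conditional entropies to introduce, with what weights, or why the resulting coefficients are sign-correct.

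There is also a direction-of-argument problem. You plan to run the entropy bound first and then ``read off'' $r_{i,j}$'s satisfying (\ref{eqn:condition1})--(\ref{eqn:condition2}). The paper does the opposite: it \emph{fixes} an optimal $\{r_{i,j}\}$ for $B_{\mathcal{C}^*}(\vec{A})$ and uses its structure to organize a two-layer induction — an inner induction (\ref{eqn:Lkgenerala}) that descends within one partition block $\mathcal{S}_k$ from $u_k$ down to $l_k$, and an outer induction (\ref{eqn:outergeneral}) that descends across blocks from $|\mathcal{E}|$ down to $0$ — terminating in $B_{\mathcal{C}^*}(\vec{A}) \geq \sum_i A_i R_i$ by the supporting-hyperplane criterion. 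Trying to extract $r_{i,j}$'s as a byproduct of the converse chain gives you no handle on why they should satisfy the budget constraints, and no reason to expect the quantization structure of Lemma~\ref{lemma:quantization} to emerge unprompted. So this is a genuine gap rather than a different route: the proposal sketches the scaffolding around the proof but omits the load-bearing piece.
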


This is indeed our proof approach, however before giving the rather long proof for the general case, we first prove a few rate combinations for $K=3$, which illustrates the basic techniques as well as facilitates better understanding. Though the proof of the case for $K=3$ can also be found in \cite{GrokopTse:08}, our proof given here is different and in fact more structured, which is geared toward the general case. After this example, a few necessary tools and intermediate results are provided, and finally we give the converse proof of Theorem \ref{theorem:maintheorem}.

\begin{figure}[tb]
\begin{centering}
\includegraphics[width=8.5cm]{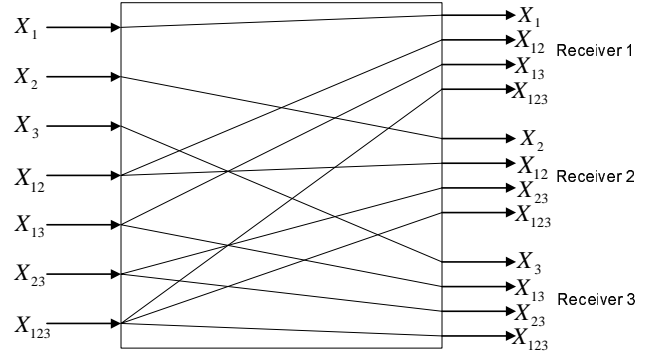}
\caption{The deterministic broadcast channel $K=3$ from \cite{GrokopTse:08}.\label{fig:deterministic}}
\end{centering}
\end{figure}

\subsection{Bounding Two Rate Combinations for $K=3$}
We give an outline of the proof for the first two inequalities in the example given after Theorem \ref{theorem:maintheorem}. 

\begin{proof}
\begin{align}
&3nR^*_1+6nR^*_2+2nR^*_3\nonumber\\
&\geq \frac{2}{3}\sum_{i=1}^3H(\mathcal{X}^n_i)+\frac{1}{3}\sum_{i=1}^3H(\mathcal{X}^n_i|X^n_{123},W_{123})\notag\\
&\stackrel{(a)}{\geq} \frac{2}{3}\sum_{i=1}^3[H(\mathcal{X}^n_i|\mathcal{W}_i)+H(\mathcal{W}_i)]\notag\\
&\qquad+\frac{1}{3}\sum_{i=1}^3H(\mathcal{X}^n_i|X^n_{123},W_{123})-n\delta\notag\\
&\stackrel{(b)}{=}2nR_1+4nR_2+2nR_3+\frac{2}{3}\sum_{i=1}^3H(\mathcal{X}^n_i|\mathcal{W}_i)\notag\\
&\qquad+\frac{1}{3}\sum_{i=1}^3H(\mathcal{X}^n_i|W_{123})-H(X^n_{123}|W_{123})-n\delta\notag\\
&\stackrel{(c)}{\geq}2nR_1+4nR_2+2nR_3+\frac{2}{3}H(X^n_{123}|W_{123})\notag\\
&\qquad+\frac{1}{3}\sum_{i=1}^3[H(\mathcal{W}_i|W_{123})+H(\mathcal{X}^n_i|\mathcal{W}_i)]\nonumber\\
&\qquad-H(X^n_{123}|W_{123})-n\delta'\notag\\
&=3nR_1+6nR_2+2nR_3-n\delta'\notag\\
&\qquad+\left[\frac{1}{3}\sum_{i=1}^3H(\mathcal{X}^n_i|\mathcal{W}_i)-\frac{1}{3}H(X^n_{123}|W_{123})\right]\notag\\
&\stackrel{(d)}{\geq} 3nR_1+6nR_2+2nR_3-n\delta',
\end{align}
where (a) is by Fano's inequality, (b) is by adding and subtracting the same term, (c) is by applying Fano's inequality on the third term, and noticing that Lemma \ref{lemma:Kwaysubmodularity} together with the fact of the channel being discrete implies that, 
\begin{align}
&\sum_{i=1}^3H(\mathcal{X}^n_i|\mathcal{W}_i)\notag\\
&\geq H(\overline{\mathcal{X}}^n_1|\overline{\mathcal{W}}_1)+H(\overline{\mathcal{X}}^n_{12}|\overline{\mathcal{W}}_{2})
+H(X^n_{123}|W_{123})\notag\\
&\geq \max\{H(X^n_{123}|W_{123}),H(\overline{\mathcal{X}}^n_2|\overline{\mathcal{W}}_2)\}\label{eqn:example1}
\end{align}
and (d) is again by the inequalities in (\ref{eqn:example1}). This completes the proof for the first rate combination. For the second rate combination, we have 
\begin{align}
&6nR^*_1+6nR^*_2+3nR^*_3\nonumber\\
&\geq \sum_{i=1}^3H(X^n_i|\overline{\mathcal{X}}^n_2\overline{\mathcal{W}}_2)+\sum_{i=1}^3H(\mathcal{X}^n_i)\notag\\
&\stackrel{(a)}{\geq} \sum_{i=1}^3H(X^n_i|\overline{\mathcal{X}}^n_2\overline{\mathcal{W}}_2)+3nR_1+6nR_2+3nR_3\nonumber\\
&\qquad+H(\overline{\mathcal{X}}^n_2|\overline{\mathcal{W}}_2)-n\delta\notag\\
&=\sum_{i=1}^3H(X^n_i\overline{\mathcal{X}}^n_2|\overline{\mathcal{W}}_2)-3H(\overline{\mathcal{X}}^n_2|\overline{\mathcal{W}}_2)\notag\\
&\qquad+3nR_1+6nR_2+3nR_3+H(\overline{\mathcal{X}}^n_2|\overline{\mathcal{W}}_2)-n\delta\notag\\
&\geq 6nR_1+6nR_2+3nR_3-n\delta'\notag\\
&\qquad+\left[\sum_{i=1}^3H(X^n_i\overline{\mathcal{X}}^n_2|\mathcal{W}_i\overline{\mathcal{W}}_2)-2H(\overline{\mathcal{X}}^n_2|\overline{\mathcal{W}}_2)\right]\notag\\
&\stackrel{(b)}{\geq}6nR_1+6nR_2+3nR_3-n\delta',\label{eqn:examplesecond}
\end{align}
where (a) is because of (\ref{eqn:example1}), and in (b) we applied Lemma \ref{lemma:Kwaysubmodularity}, 
\begin{align}
\sum_{i=1}^3H(X^n_i\overline{\mathcal{X}}^n_2|\mathcal{W}_i\overline{\mathcal{W}}_2)&\geq H(\overline{X}^n_1|\overline{\mathcal{W}}_1)+2H(\overline{\mathcal{X}}^n_2|\overline{\mathcal{W}}_2),
\end{align}
and then omit the first term since the channel is discrete; the rest of the inequalities in (\ref{eqn:examplesecond}) are by Fano's inequality.
\end{proof}

This proof illustrates several main components of the proof for the general case. Firstly, the rate combination needs to be written as summations under appropriate proportions, secondly the $K$-way submodularity lemma needs to strategically used, and thirdly there are connections between different layers of messages and thus terms may be canceled among them. For the general $K$-user problem, the bounding becomes much more complicated, and we will rely on the optimal solution $B_{C^*}(\vec{A})$ to provide necessary structure and guidance.  

\subsection{Several Properties of $\phi_{i,j}$}

We begin with a few properties on the exchange rate $\phi_{i,j}$.
\begin{lemma}
\label{lemma:increasingijk}
For any integers $i,j,k$ such that $1\leq i<j<k\leq K$, we have $\phi_{i,j}\phi_{j,k}=\phi_{i,k}$.
\end{lemma}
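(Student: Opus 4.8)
The statement is a pure identity among binomial coefficients, so the plan is a short algebraic computation; the only care needed is bookkeeping of indices. I will use the ``subset of a subset'' identity
\begin{align}
\binom{n}{m}\binom{m}{\ell}=\binom{n}{\ell}\binom{n-\ell}{m-\ell},\qquad \ell\le m\le n,\notag
\end{align}
twice. First rewrite the defining formula, using $\binom{j-1}{j-i}=\binom{j-1}{i-1}$, as $\phi_{i,j}=\binom{j-1}{i-1}\big/\binom{K-i}{j-i}$ for $i<j$, and similarly $\phi_{j,k}=\binom{k-1}{j-1}\big/\binom{K-j}{k-j}$ and $\phi_{i,k}=\binom{k-1}{i-1}\big/\binom{K-i}{k-i}$. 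Hence
\begin{align}
\phi_{i,j}\phi_{j,k}=\frac{\binom{j-1}{i-1}\binom{k-1}{j-1}}{\binom{K-i}{j-i}\binom{K-j}{k-j}}.\notag
\end{align}

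Next I apply the identity to the numerator with $(n,m,\ell)=(k-1,j-1,i-1)$, giving $\binom{k-1}{j-1}\binom{j-1}{i-1}=\binom{k-1}{i-1}\binom{k-i}{j-i}$; and to the denominator with $(n,m,\ell)=(K-i,k-i,j-i)$, giving $\binom{K-i}{k-i}\binom{k-i}{j-i}=\binom{K-i}{j-i}\binom{K-j}{k-j}$. (Both index conditions $i<j<k\le K$ guarantee the required inequalities among the arguments, and in particular that every binomial in sight is nonzero, so the division is legitimate.) Substituting,
\begin{align}
\phi_{i,j}\phi_{j,k}=\frac{\binom{k-1}{i-1}\binom{k-i}{j-i}}{\binom{K-i}{k-i}\binom{k-i}{j-i}}=\frac{\binom{k-1}{i-1}}{\binom{K-i}{k-i}}=\phi_{i,k},\notag
\end{align}
which is the claim. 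Alternatively one can expand all six binomials into factorials and cancel directly; that works too but is messier.

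There is essentially no obstacle here beyond choosing the right instance of the subset-of-a-subset identity so that the spurious factor $\binom{k-i}{j-i}$ appears in both numerator and denominator and cancels. It may also be worth remarking, since it motivates the lemma, that this identity is exactly what one expects from the achievability interpretation: converting level-$i$ rate to level-$j$ and then level-$j$ to level-$k$ must yield the same net exchange ratio as converting level-$i$ to level-$k$ in one step, which is why the pairwise exchange rates compose transitively on an increasing chain $i<j<k$.
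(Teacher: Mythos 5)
Your proof is correct: the rewriting $\phi_{i,j}=\binom{j-1}{i-1}/\binom{K-i}{j-i}$ is right, both instances of the subset-of-a-subset identity are applied with valid index ranges, and the spurious factor $\binom{k-i}{j-i}$ cancels exactly as claimed. The paper proves the lemma by the alternative route you flag in passing, namely expanding all six binomials into factorials and cancelling; you call that ``messier,'' but in fact it collapses immediately to a telescoping product, because the expansion shows $\phi_{i,j}=\frac{(j-1)!\,(K-j)!}{(i-1)!\,(K-i)!}$ for $i<j$, i.e.\ $\phi_{i,j}=\binom{K-1}{i-1}\big/\binom{K-1}{j-1}$, which is precisely the content of Lemma~\ref{lemma:alternativephijk}. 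Once $\phi_{i,j}$ is seen as a ratio $f(i)/f(j)$ of a single function, transitivity $\phi_{i,j}\phi_{j,k}=\phi_{i,k}$ is automatic, and the same observation also proves Lemmas~\ref{lemma:decreasingijk} and~\ref{lemma:circleij} at a glance. So your argument is sound and self-contained, but the factorial expansion is not just an equivalent computation; it surfaces the structural fact that makes the whole family of $\phi$-identities transparent, which is why the paper favours it.
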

\begin{lemma}
\label{lemma:decreasingijk}
For any integers $i,j,k$ such that $1\leq i<j<k\leq K$, we have $\phi_{k,j}\phi_{j,i}=\phi_{k,i}$.
\end{lemma}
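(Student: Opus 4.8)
The plan is to reduce the multiplicative claim on the down-exchange rates to a pure binomial-coefficient identity and then verify it by two applications of the \emph{subset-of-a-subset} (trinomial revision) identity $\binom{n}{a}\binom{a}{b}=\binom{n}{b}\binom{n-b}{a-b}$. Recall that for $i>j$ the down-exchange rate is $\phi_{i,j}=\binom{i}{i-j}^{-1}\binom{K-j}{i-j}$, and since $\binom{i}{i-j}=\binom{i}{j}$ this is the same as $\phi_{i,j}=\binom{i}{j}^{-1}\binom{K-j}{i-j}$. Substituting the three relevant instances $\phi_{k,j}$, $\phi_{j,i}$, $\phi_{k,i}$ and clearing denominators, the assertion $\phi_{k,j}\phi_{j,i}=\phi_{k,i}$ becomes equivalent to
\begin{align}
\binom{K-j}{k-j}\binom{K-i}{j-i}\binom{k}{i}=\binom{K-i}{k-i}\binom{k}{j}\binom{j}{i}.\notag
\end{align}

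First I would rewrite the right-hand side using trinomial revision with $n=k$, $a=j$, $b=i$, which gives $\binom{k}{j}\binom{j}{i}=\binom{k}{i}\binom{k-i}{j-i}$; the common factor $\binom{k}{i}$ then cancels, and it remains to prove
\begin{align}
\binom{K-j}{k-j}\binom{K-i}{j-i}=\binom{K-i}{k-i}\binom{k-i}{j-i}.\notag
\end{align}
Next I would apply trinomial revision once more, now to the right-hand side with $n=K-i$, $a=k-i$, $b=j-i$: this yields $\binom{K-i}{k-i}\binom{k-i}{j-i}=\binom{K-i}{j-i}\binom{(K-i)-(j-i)}{(k-i)-(j-i)}=\binom{K-i}{j-i}\binom{K-j}{k-j}$, which is precisely the left-hand side. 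This completes the argument.

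I do not expect a real obstacle: the only care needed is bookkeeping, namely checking that under $1\le i<j<k\le K$ all the binomial arguments lie in the valid range so that the identities and the cancellations are legitimate, and correctly replacing the ``complement'' forms $\binom{i}{i-j}$, $\binom{k}{k-i}$, $\binom{j}{j-i}$ by $\binom{i}{j}$, $\binom{k}{i}$, $\binom{j}{i}$. If a non-computational derivation were preferred, the same identity can be read off by double counting chains of nested subsets $\mathcal{A}_i\subset\mathcal{A}_j\subset\mathcal{A}_k\subseteq\mathcal{I}_K$ of sizes $i<j<k$, but the two-line verification above is cleaner and is what I would include. (An entirely analogous computation, using $\binom{j-1}{j-i}=\binom{j-1}{i-1}$ for the up-exchange rates, proves Lemma~\ref{lemma:increasingijk}.)
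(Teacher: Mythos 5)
Your proof is correct, and it takes a genuinely different route from the paper's. The paper simply expands all four binomial coefficients in $\phi_{k,j}\phi_{j,i}$ into factorials and cancels term by term until $\phi_{k,i}$ emerges; it is a brute-force telescoping of factorials. You instead rewrite the claim as the symmetric binomial identity
\begin{align}
\binom{K-j}{k-j}\binom{K-i}{j-i}\binom{k}{i}=\binom{K-i}{k-i}\binom{k}{j}\binom{j}{i},\notag
\end{align}
and close it with two applications of the subset-of-a-subset identity $\binom{n}{a}\binom{a}{b}=\binom{n}{b}\binom{n-b}{a-b}$. Both are short, but yours is more structural: it isolates the one combinatorial fact doing the work, makes the cancellation pattern transparent, and (as you note) admits a direct double-counting reading via chains $\mathcal{A}_i\subset\mathcal{A}_j\subset\mathcal{A}_k\subseteq\mathcal{I}_K$. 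The paper's factorial expansion has the modest advantage of requiring no auxiliary identity and being uniform in style with the adjacent Lemmas; yours scales more gracefully if one wanted to see at a glance why the chain property $\phi_{i,j}\phi_{j,k}=\phi_{i,k}$ holds for both monotone directions. Your range bookkeeping under $1\le i<j<k\le K$ is sound, and the cancellation of $\binom{k}{i}$ is legitimate since it is strictly positive.
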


\begin{lemma}
\label{lemma:circleij}
For any integers $i,j$ such that $1\leq i<j \leq K$, we have $\phi_{i,j}\phi_{j,i}=i/j<1$.
\end{lemma}

\begin{lemma}
\label{lemma:twostep}
For any integers $i,j,k$, we have $\phi_{i,k}\geq \phi_{i,j}\phi_{j,k}$, with equality only when the sequence $(i,j,k)$ is monotonic. 
\end{lemma}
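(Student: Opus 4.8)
\textbf{Proof plan for Lemma \ref{lemma:twostep}.}
The plan is to verify the inequality $\phi_{i,k}\geq \phi_{i,j}\phi_{j,k}$ by exhausting the relative orderings of the triple $(i,j,k)$. First observe that when all three indices are equal the claim is the trivial $1\geq 1$, and when exactly two coincide it reduces either to $\phi_{i,k}\geq\phi_{i,k}\cdot 1$ or to $\phi_{i,k}\geq 1\cdot\phi_{i,k}$, again with equality, so from now on assume $i,j,k$ are distinct. If $(i,j,k)$ is monotonic --- either $i<j<k$ or $i>j>k$ --- then Lemma \ref{lemma:increasingijk} (resp.\ Lemma \ref{lemma:decreasingijk}) gives $\phi_{i,j}\phi_{j,k}=\phi_{i,k}$ exactly, which establishes both the inequality and the ``only if'' direction of the equality condition. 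So the substance of the lemma is the strict inequality in the four non-monotonic orderings: $i<k<j$, $k<i<j$, $j<i<k$, and $j<k<i$.

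The key step is to reduce each non-monotonic case to an application of Lemma \ref{lemma:circleij}, which supplies the strictly-less-than-one factor. Consider for instance the case $i<j$ and $k<j$ (covering both $i<k<j$ and $k<i<j$): here $\phi_{i,j}$ is an up-exchange rate and $\phi_{j,k}$ is a down-exchange rate, both entering $j$ from below and leaving $j$ going down. The idea is to route through the top index $j$ and then ``come back'': write $\phi_{i,j}\phi_{j,k} = \phi_{i,j}\phi_{j,i}\cdot(\text{something})$ when $i<k$, or more symmetrically, compare $\phi_{i,j}\phi_{j,k}$ against $\phi_{i,k}$ by multiplying and dividing by the appropriate monotonic product. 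Concretely, for $i<k<j$ one uses $\phi_{i,j}=\phi_{i,k}\phi_{k,j}$ (Lemma \ref{lemma:increasingijk}, applied to the monotone triple $i<k<j$), so that $\phi_{i,j}\phi_{j,k}=\phi_{i,k}\,\phi_{k,j}\phi_{j,k}=\phi_{i,k}\cdot\frac{k}{j}<\phi_{i,k}$ by Lemma \ref{lemma:circleij}. The other three non-monotonic cases are handled by the analogous substitution: in $k<i<j$ use $\phi_{j,k}=\phi_{j,i}\phi_{i,k}$ (Lemma \ref{lemma:decreasingijk} on the monotone triple $j>i>k$) to get $\phi_{i,j}\phi_{j,k}=\phi_{i,j}\phi_{j,i}\cdot\phi_{i,k}=\frac{i}{j}\phi_{i,k}<\phi_{i,k}$; in $j<i<k$ use $\phi_{i,j}=\phi_{i,k}\phi_{k,j}$ no --- rather $\phi_{j,k}=\phi_{j,i}\phi_{i,k}$ is wrong here since $j<i<k$ is not monotone for that product, so instead use the monotone triple $j<i<k$ to write $\phi_{j,k}=\phi_{j,i}\phi_{i,k}$ (Lemma \ref{lemma:increasingijk}), yielding $\phi_{i,j}\phi_{j,k}=\phi_{i,j}\phi_{j,i}\phi_{i,k}=\frac{j}{i}\phi_{i,k}$; and in $j<k<i$ use $\phi_{i,j}=\phi_{i,k}\phi_{k,j}$ (Lemma \ref{lemma:decreasingijk} on $i>k>j$) to get $\phi_{i,j}\phi_{j,k}=\phi_{i,k}\phi_{k,j}\phi_{j,k}=\phi_{i,k}\cdot\frac{j}{k}$. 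In every non-monotonic case the extra factor is a ratio of two indices, one strictly smaller than the other, hence lies in $(0,1)$, giving the strict inequality.

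The main obstacle, such as it is, is bookkeeping: one must correctly identify, for each of the four non-monotonic orderings, which of Lemmas \ref{lemma:increasingijk} and \ref{lemma:decreasingijk} applies to which monotone sub-triple, and make sure the leftover factor is indeed a genuine ``round trip'' $\phi_{a,b}\phi_{b,a}$ so that Lemma \ref{lemma:circleij} can be invoked. A cleaner way to organize this, which I would adopt to avoid case-splitting errors, is to note that $\phi_{i,j}\phi_{j,k}/\phi_{i,k}$ always simplifies --- using Lemmas \ref{lemma:increasingijk} and \ref{lemma:decreasingijk} to pull the monotone pieces out --- to $\phi_{a,b}\phi_{b,a}$ where $b$ is the index among $\{i,j,k\}$ that is ``out of place'' relative to the monotone order of the other two, equivalently $b$ is $\max$ or $\min$ according to whether $j$ is above or below both of $i,k$, and $a$ is whichever of $i,k$ sits between. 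Then the ratio equals $\min(a,b)/\max(a,b)\le 1$ with equality iff $a=b$, i.e.\ iff two of the original indices coincide, which combined with the monotone case completes the proof. I do not expect any analytic difficulty beyond this combinatorial casework, since all the hard identities about the binomial-coefficient formulas for $\phi_{i,j}$ have already been packaged into Lemmas \ref{lemma:increasingijk}--\ref{lemma:circleij}.
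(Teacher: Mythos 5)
Your proposal is correct and uses essentially the same strategy as the paper: in each non-monotonic ordering, pull a monotone sub-triple out of the product $\phi_{i,j}\phi_{j,k}$ via Lemma \ref{lemma:increasingijk} or \ref{lemma:decreasingijk} so that what remains is $\phi_{i,k}$ times a round-trip factor $\phi_{a,b}\phi_{b,a}$, which Lemma \ref{lemma:circleij} shows is strictly less than $1$. The only cosmetic difference is that the paper first observes that, since the argument only invokes Lemmas \ref{lemma:increasingijk}--\ref{lemma:circleij}, it may assume $i<j$ without loss of generality and thereby reduce to three cases ($k<i<j$, $i<k<j$, $i<j<k$), whereas you enumerate all four non-monotonic orderings explicitly; both routes give the same identities and the same strict factors $i/j$, $k/j$, $j/i$, $j/k$. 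The mid-paragraph self-correction in the $j<i<k$ case lands on the right identity, so the argument is sound as written, though it would benefit from being cleaned up before inclusion.
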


\begin{lemma}
\label{lemma:consequtivedown}
For any $k>j$, we have $(k+1)\binom{K-1}{k-1}\phi_{k+1,j}=k\binom{K-1}{k}\phi_{k,j}$.
\end{lemma}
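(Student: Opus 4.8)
The plan is to prove the identity by a direct substitution of the closed form for the down-exchange rate followed by elementary binomial manipulations. Since $k>j$ we have $k+1>k>j$, so both $\phi_{k,j}$ and $\phi_{k+1,j}$ fall under the ``down-exchange'' branch of the definition, giving $\phi_{k,j}=\binom{k}{k-j}^{-1}\binom{K-j}{k-j}$ and $\phi_{k+1,j}=\binom{k+1}{k+1-j}^{-1}\binom{K-j}{k+1-j}$. Using $\binom{k}{k-j}=\binom{k}{j}$ and $\binom{k+1}{k+1-j}=\binom{k+1}{j}$, I would write the left-hand side as $(k+1)\binom{K-1}{k-1}\binom{K-j}{k+1-j}/\binom{k+1}{j}$ and the right-hand side as $k\binom{K-1}{k}\binom{K-j}{k-j}/\binom{k}{j}$, so that after cancelling the common factor $\binom{K-j}{k-j}/\binom{k}{j}$ the claim reduces to a statement about $K$ and $k$ only.

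To carry out the cancellation cleanly I would use the two one-step recursions $\binom{K-j}{k+1-j}=\frac{K-k}{k+1-j}\binom{K-j}{k-j}$ and $\binom{k+1}{j}=\frac{k+1}{k+1-j}\binom{k}{j}$; the factor $(k+1-j)$ then cancels between numerator and denominator, and the factor $(k+1)$ cancels as well, leaving exactly $(K-k)\binom{K-1}{k-1}$ on the left and $k\binom{K-1}{k}$ on the right. The final step is the standard identity $\binom{K-1}{k}=\frac{K-k}{k}\binom{K-1}{k-1}$ (an instance of $\binom{n}{r}=\frac{n-r}{r}\binom{n}{r-1}$ with $n=K-1$, $r=k$), which gives $k\binom{K-1}{k}=(K-k)\binom{K-1}{k-1}$ and closes the argument.

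There is no genuine obstacle here; the only two points needing a line of care are (i) confirming the ``down'' branch of the definition applies, which is immediate from $k+1>k>j$, and (ii) the boundary range of $k$: when $k\ge K$ the binomial $\binom{K-1}{k}$ and the binomial $\binom{K-j}{k+1-j}$ inside $\phi_{k+1,j}$ both vanish under the usual convention, so both sides equal $0$ and the identity holds trivially; for $j<k<K$ all binomials appearing are strictly positive and the divisions above are legitimate.
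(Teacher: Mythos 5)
Your proof is correct and is essentially the same elementary computation as the paper's: both substitute the closed-form expression for the down-exchange rate $\phi_{\cdot,j}$ and verify the identity by binomial/factorial arithmetic. The paper simply expands both sides fully into factorials and observes they coincide, whereas you organize the same cancellations through one-step Pascal-type recursions and reduce to the clean identity $k\binom{K-1}{k}=(K-k)\binom{K-1}{k-1}$; this is a tidier presentation but not a different argument.
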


\begin{lemma}
For any $i<j$, $\binom{K-1}{i-1}{\binom{K-1}{j-1}}^{-1}=\phi_{i,j}$.
\label{lemma:alternativephijk}
\end{lemma}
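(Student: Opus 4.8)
The statement is an elementary identity between binomial coefficients, so the plan is to verify it directly, though a slightly more structural route is also available. Recalling the definition $\phi_{i,j}=\binom{K-i}{j-i}^{-1}\binom{j-1}{j-i}$ for $i<j$, and using $\binom{j-1}{j-i}=\binom{j-1}{i-1}$, the claim is equivalent to
\begin{align}
\frac{\binom{K-1}{i-1}}{\binom{K-1}{j-1}}=\frac{\binom{j-1}{i-1}}{\binom{K-i}{j-i}}.
\end{align}
First I would expand every binomial coefficient into factorials: the left-hand side becomes $\frac{(j-1)!\,(K-j)!}{(i-1)!\,(K-i)!}$, and the right-hand side becomes $\frac{(j-1)!}{(i-1)!\,(j-i)!}\cdot\frac{(j-i)!\,(K-j)!}{(K-i)!}=\frac{(j-1)!\,(K-j)!}{(i-1)!\,(K-i)!}$, so the two agree and the proof is finished. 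This step is entirely routine.

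Alternatively --- and this is the route I would actually prefer to write down, since it reuses machinery already in the paper --- I would first dispatch the base case $j=i+1$: here $\phi_{i,i+1}=\binom{K-i}{1}^{-1}\binom{i}{1}=\tfrac{i}{K-i}$, which equals $\binom{K-1}{i-1}\big/\binom{K-1}{i}$ after a one-line factorial cancellation. For general $j>i+1$, repeated application of Lemma \ref{lemma:increasingijk} to the chain $i<i+1<\cdots<j$ gives $\phi_{i,j}=\prod_{\ell=i}^{j-1}\phi_{\ell,\ell+1}$. Substituting the base-case formula, the product telescopes:
\begin{align}
\phi_{i,j}=\prod_{\ell=i}^{j-1}\frac{\binom{K-1}{\ell-1}}{\binom{K-1}{\ell}}=\frac{\binom{K-1}{i-1}}{\binom{K-1}{j-1}}.
\end{align}

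There is essentially no obstacle here. The only points requiring a moment of care are the rewriting $\binom{j-1}{j-i}=\binom{j-1}{i-1}$ in the definition of $\phi_{i,j}$, and, in the telescoping argument, the fact that Lemma \ref{lemma:increasingijk} requires three strictly increasing indices, so the case $j=i+1$ must be treated separately as the base of the induction. I would present the direct factorial computation as the primary proof for brevity, mentioning the telescoping identity as an aside since it makes transparent why this quantity is precisely the exchange rate.
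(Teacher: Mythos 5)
Your primary proof --- expanding both $\binom{K-1}{i-1}\binom{K-1}{j-1}^{-1}$ and $\phi_{i,j}$ into factorials and cancelling --- is exactly the paper's proof, and it is correct. The alternative telescoping argument via the base case $\phi_{\ell,\ell+1}=\binom{K-1}{\ell-1}\big/\binom{K-1}{\ell}$ and Lemma~\ref{lemma:increasingijk} is also valid and a pleasant structural aside, but the paper simply uses the direct computation.
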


The above lemmas (particularly Lemma \ref{lemma:increasingijk}-\ref{lemma:twostep}) may be best understood as a currency exchange system where up-converting (or down-converting) many times results in the same final exchange rate as a single step conversion, but up-converting mixed with down-converting to the original currency results in a loss. The proofs of these lemmas are given in Appendix \ref{appendix:lemmas}.

\subsection{Extremal Solutions and the Effective Rate Set}
To prove the converse part of Theorem \ref{theorem:boundingplane}, we proceed in two steps: first we identify some special optimal solutions for the maximization problem (\ref{eqn:maximizingBstar}) with certain desired properties, then show that $B_{\mathcal{C}^*}(\vec{A})$ is an upper bound to  the quantity $B_{\mathcal{C}}(\vec{A})$. In this subsection we discuss the first step.



\begin{definition}
A non-negative setting of $r_{i,j}$ satisfying (\ref{eqn:condition1}) is called extremal if the following conditions hold (\textsf{i}) For each $i=1,2,\ldots,K$, there exists a unique $j\in \mathcal{I}_K$ such that $r_{i,j}=R^*_i$ and $r_{i,k}=0$ for $k\neq j$. (\textsf{ii}) If $r_{i,j}=R^*_i>0$, then $r_{j,j}=R^*_j$. (\textsf{iii}) If $r_{i,j}=R^*_i>0$, then for any $k$ such that $\max(i,j)>k>\min(i,j)$, $r_{k,j}=R^*_k$.
\end{definition}

\begin{lemma}
\label{lemma:extremal}
The solutions to the maximization problem (\ref{eqn:maximizingBstar}) include one that is extremal.
\end{lemma}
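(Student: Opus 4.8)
The plan has three stages: reduce the maximization~(\ref{eqn:maximizingBstar}) to a separable linear program (which immediately yields property~(\textsf{i})), extract from the exchange--rate identities a propagation property of the optimal ``target'' sets, and then single out one optimal target map whose geometry forces properties~(\textsf{ii}) and~(\textsf{iii}). I expect the second stage, together with the realization in the first stage that the right selection rule is ``nearest target,'' to be the crux; the third stage is then a finite case analysis. Only Lemmas~\ref{lemma:twostep}, \ref{lemma:increasingijk} and~\ref{lemma:decreasingijk} are needed from the $\phi$-identities.

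\emph{Stage 1.} Since every $A_k\ge 0$, at an optimum of~(\ref{eqn:maximizingBstar}) one may take $R_j=\sum_{i}\phi_{i,j}r_{i,j}$, so the objective becomes $\sum_{i,j}A_j\phi_{i,j}r_{i,j}$ under only the constraints $r_{i,j}\ge 0$ and $\sum_j r_{i,j}\le R^*_i$. These decouple over $i$, and for each $i$ the maximum is attained by placing the whole budget $R^*_i$ on a single index lying in $M_i:=\arg\max_j A_j\phi_{i,j}$ (which is always nonempty since $\phi_{i,j}>0$). Writing $v_i:=\max_j A_j\phi_{i,j}$ and $S:=\{i:R^*_i>0\}$, every map $\sigma$ with $\sigma(i)\in M_i$ for $i\in S$ produces an optimal solution of the form required in~(\textsf{i}) (for $i\notin S$ the three conditions are vacuous). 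The key choice is to take, among all such $\sigma$, one minimizing $\sum_{i\in S}|\sigma(i)-i|$; since this objective is separable over $i$, such a $\sigma$ assigns to each $i$ a point of $M_i$ \emph{nearest} to $i$.

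\emph{Stage 2.} I would then establish two ``propagation'' facts from the identities of the preceding subsection: (F2) if $j\in M_i$ and $j\ne i$ then $j\in M_j$; (F3) if $j\in M_i$ and $k$ lies strictly between $i$ and $j$ then $j\in M_k$. Both use the same mechanism. For (F2), picking any $m\in M_j$, Lemma~\ref{lemma:twostep} gives $\phi_{i,m}\ge\phi_{i,j}\phi_{j,m}$, hence $A_j\phi_{i,j}=v_i\ge A_m\phi_{i,m}\ge\phi_{i,j}(A_m\phi_{j,m})=\phi_{i,j}v_j$, so $A_j\ge v_j\ge A_j$, i.e.\ $j\in M_j$. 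For (F3) with $i<k<j$ one inserts the exact relation $\phi_{i,j}=\phi_{i,k}\phi_{k,j}$ of Lemma~\ref{lemma:increasingijk} (and that of Lemma~\ref{lemma:decreasingijk} in the reversed orientation) into the analogous chain. Iterating (F2)--(F3) shows that $j\in M_i$ implies $j\in M_\ell$ for every $\ell$ between $i$ and $j$ inclusive; and one more application of the same identities shows that if $j\in M_i$, $k$ lies strictly between $i$ and $j$, and $k\in M_k$, then also $k\in M_i$.

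\emph{Stage 3.} It remains to check that the nearest--target $\sigma$ is extremal. Property~(\textsf{ii}) is immediate: if $\sigma(i)=j\in S$ with $j\ne i$ then $j\in M_j$ by (F2), and $j$ is the unique point of $M_j$ at distance $0$ from itself, so $\sigma(j)=j$. For~(\textsf{iii}), suppose toward a contradiction that $\sigma(i)=j$ while some $k\in S$ strictly between $i$ and $j$ has $\sigma(k)=m\ne j$; take $i<k<j$. By (F3), $j\in M_k$, and the nearness of $m$ forces $|m-k|\le j-k$, so $m<j$. A short case split --- on whether $m$ lies in $(k,j)$, equals $k$, lies in $[i,k)$, or lies below $i$ --- then yields a contradiction in each case: in the first three, Stage~2 puts $m$ (or $k$) into $M_i$ at distance from $i$ strictly less than $|j-i|$, contradicting that $\sigma(i)=j$ is nearest; in the last, $m\in M_i$ again and the two nearness inequalities $|j-i|\le|m-i|$ and $|m-k|\le j-k$ combine, using $i<k<j$, to give $k\le i$, which is impossible. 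Hence no violation of~(\textsf{iii}) exists. A routine extension handles rate vectors with some $R^*_i=0$, completing the proof.
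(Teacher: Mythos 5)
Your proof is correct and follows a genuinely different route from the paper's. The paper begins with an arbitrary optimal $\{r_{i,j}\}$ and repairs it one condition at a time: it consolidates ties to enforce (\textsf{i}); re-routes $r_{i,j}$ into $r_{i,k}$ to enforce (\textsf{ii}), justified by $A_k\phi_{i,k}\ge A_k\phi_{i,j}\phi_{j,k}\ge A_j\phi_{i,j}$; and handles (\textsf{iii}) by a three-way case split on where an intermediate index $k$ sends its mass, eliminating two cases via Lemma~\ref{lemma:circleij} and re-routing in the third. Your argument instead constructs the extremal optimizer directly: having reduced~(\ref{eqn:maximizingBstar}) to a separable LP, you introduce the per-index argmax sets $M_i$, derive the propagation facts (F2)--(F3) from the same $\phi$-identities, and select the nearest target in each $M_i$; conditions (\textsf{ii}) and (\textsf{iii}) then follow by a short contradiction from nearness. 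Both arguments rest on the same exchange-rate lemmas, but your construction is one-shot --- there is no sequence of local modifications whose termination and non-interference need to be verified --- and the propagation facts (F2)--(F3) make explicit the interval structure of the argmax sets that the paper's three-case analysis exploits only implicitly.
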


The lemma is intuitively true since a linear optimization problem has an optimal solution at its corner point. The concept of extremal solution makes the definition of corner point in the problem context more precise. A proof is given in Appendix \ref{appendix:lemmas}.

\begin{definition}
In an optimal extremal solution, the effective rate set is defined as $\mathcal{E}\triangleq\{i\in \mathcal{I}_K:r_{j,i}>0\mbox{ for some }j\}$. The elements of $\mathcal{E}$ in an increasing order are denoted as $e_1,e_2,\ldots,e_{|\mathcal{E}|}$.
\end{definition}

Lemma \ref{lemma:extremal} implies there exists a specific structure of rate exchange in the optimal extremal solutions.
\begin{lemma}
\label{lemma:quantization}
For an optimal extremal solution:
\begin{itemize}
\item There exist a partition of the sequence $1,2,\ldots,K$, labeled as $\mathcal{S}_1,\mathcal{S}_2,\ldots,\mathcal{S}_{|\mathcal{E}|}$, each consisting a consecutive sequence of integers, and $e_i\in \mathcal{S}_i$.
\item For $k\in \mathcal{S}_i$, we have $r_{k,e_i}=R^*_k$.
\end{itemize}
\end{lemma}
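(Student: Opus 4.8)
The plan is to extract the claimed block structure directly from the three conditions in the definition of an extremal solution, applied to an optimal extremal solution whose existence is guaranteed by Lemma \ref{lemma:extremal}. First I would record the map $\sigma:\mathcal{I}_K\to\mathcal{I}_K$ defined by condition (\textsf{i}): for each $i$, $\sigma(i)$ is the unique index with $r_{i,\sigma(i)}=R^*_i$. The effective rate set is then $\mathcal{E}=\{\sigma(i): R^*_i>0\}$ (dropping indices $i$ with $R^*_i=0$, which contribute nothing and may be assigned arbitrarily, say $\sigma(i)=i$). Condition (\textsf{ii}) says $\sigma(i)\in\mathcal{E}$ implies $\sigma(\sigma(i))=\sigma(i)$, i.e. every element of $\mathcal{E}$ is a fixed point of $\sigma$; so $\mathcal{E}$ is exactly the set of fixed points (on the support of $\vec{R}^*$), and $e\in\mathcal{E}$ indeed satisfies $r_{e,e}=R^*_e$.

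The core step is to show that $\sigma^{-1}(e_i)$ is a set of consecutive integers containing $e_i$, for each $i$; these preimages then form the desired partition $\mathcal{S}_i=\sigma^{-1}(e_i)$, and the second bullet ($r_{k,e_i}=R^*_k$ for $k\in\mathcal{S}_i$) is immediate from the definition of $\sigma$. Consecutiveness is precisely what condition (\textsf{iii}) buys: if $r_{i,j}=R^*_i>0$ with $j=\sigma(i)$, then for every $k$ strictly between $\min(i,j)$ and $\max(i,j)$ we have $r_{k,j}=R^*_k$, hence $\sigma(k)=j$ as well (here one needs $R^*_k>0$, which follows since $r_{k,j}=R^*_k>0$ is asserted — or, if $R^*_k=0$, $k$ carries no rate and can be freely placed in this block). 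Thus the interval $[\min(i,\sigma(i)),\max(i,\sigma(i))]$ is entirely contained in $\sigma^{-1}(\sigma(i))$. Taking the union of these intervals over all $i$ with $\sigma(i)=e_\ell$ shows $\sigma^{-1}(e_\ell)$ is itself an interval (a union of intervals each containing the common point $e_\ell$, since $\sigma(e_\ell)=e_\ell$ places $e_\ell$ in the block, and each such interval has one endpoint equal to $i$ and the other equal to $e_\ell$). Finally, since the $\mathcal{S}_\ell=\sigma^{-1}(e_\ell)$ are disjoint, cover all of $\mathcal{I}_K$ (on the support; zero-rate indices are absorbed into an adjacent block), and each is an interval, they must appear in the natural order $\mathcal{S}_1<\mathcal{S}_2<\cdots<\mathcal{S}_{|\mathcal{E}|}$ with $e_\ell\in\mathcal{S}_\ell$, matching the increasing labeling $e_1<e_2<\cdots$.

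The main obstacle I anticipate is bookkeeping around the degenerate indices with $R^*_k=0$: the conditions in the extremal definition only constrain $r_{i,j}$ when the relevant rate is strictly positive, so a careful argument is needed to show these zero-rate coordinates can be consistently folded into neighboring blocks without breaking the "consecutive sequence of integers" requirement — one cannot simply ignore them since the partition must cover $1,2,\ldots,K$. The cleanest way around this is to extend $\sigma$ to all of $\mathcal{I}_K$ by fiat (assigning each zero-rate index to the block of its nearest positive-rate neighbor, breaking ties to the left), check that this extension still satisfies the interval property, and then observe that the maximization value $B_{\mathcal{C}^*}(\vec{A})$ is unaffected. Everything else is a direct transcription of conditions (\textsf{i})--(\textsf{iii}) into the language of the fiber partition of $\sigma$.
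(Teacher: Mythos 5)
Your proof is correct and fills in exactly the argument the paper leaves implicit: the author states Lemma~\ref{lemma:quantization} with no written proof, presenting it as an immediate consequence of Lemma~\ref{lemma:extremal}, and your unpacking of conditions (\textsf{i})--(\textsf{iii}) into the fiber-of-$\sigma$ picture is the natural way to supply it. Your anticipated wrinkle about zero-rate indices is genuine (the extremality conditions do not constrain such $k$, yet the partition must cover all of $\mathcal{I}_K$) and your fix is sound: when $R^*_k=0$, the constraint $\sum_j r_{k,j}\leq R^*_k$ together with non-negativity forces every $r_{k,j}=0$, so $r_{k,e_i}=R^*_k$ holds vacuously for any block assignment and $k$ can be folded into a neighboring interval without disturbing anything. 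The only small slip is the parenthetical ``which follows since $r_{k,j}=R^*_k>0$ is asserted''---condition (\textsf{iii}) asserts $r_{k,j}=R^*_k$ but not positivity---though you immediately recover with the ``or, if $R^*_k=0$'' clause, so no harm is done.
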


This structure is analogous to scalar quantization to some extent, as illustrated in Fig. \ref{fig:quantization}.

\begin{figure}[tb]
\begin{centering}
\includegraphics[width=8.5cm]{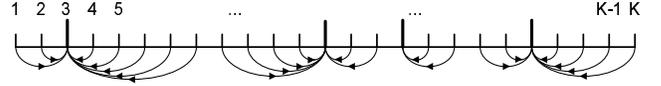}
\caption{An illustration of the optimal extremal solution structure. The longer and bolder marks give the set $\mathcal{E}$.\label{fig:quantization}}
\end{centering}
\end{figure}

\subsection{Proof of the Converse Part of Theorem \ref{theorem:boundingplane}}
\begin{proof}
For a fixed vector $\vec{A}$, let $\{\hat{r}_{i,j}\}$ be an optimal extremal solution for the maximization problem (\ref{eqn:maximizingBstar}), and let $\mathcal{E}$ be its effective rate set and let  $\mathcal{S}_1,\mathcal{S}_2,\ldots,\mathcal{S}_{|\mathcal{E}|}$ be the partition sets; for convenience, denote the smallest element in the set $\mathcal{S}_i$ as $l_i$ and the largest element as $u_i$. Assuming a sequence of length-$n$ codes is given with diminishing error probability. 
Let $\mathcal{X}_i$ and $\overline{\mathcal{X}}_i$ be defined similarly as $\mathcal{W}_i$ and $\overline{\mathcal{W}}_i$. The proof consists of two layers of inductions. We start from the inner layer, and then put the pieces together in the outer layer.

\begin{figure*}[b]
\normalsize
\newcounter{MYtempeqncnt}
\addtocounter{equation}{4}
\setcounter{MYtempeqncnt}{\value{equation}}
\hrulefill
\begin{align}
L_k\geq& \sum_{j=e_{k}}^{u_k-1}a_{k,j} \sum_{i=1}^KH(\mathcal{X}^n_i|\overline{\mathcal{X}}^n_{j+1}\overline{\mathcal{W}}_{j+1})+nK\left(\phi_{u_k,e_{k}}-\frac{A_{e_{k+1}}\phi_{u_k,e_{k+1}}}{A_{e_{k}}}\right)R_{u_k}\notag\\
&+a_{k,u_k}\sum_{i=1}^KH(\mathcal{W}_i|\overline{\mathcal{W}}_{u_k})+a_{k,u_k}\sum_{i=1}^{u_k}H(\overline{\mathcal{X}}^n_{i}|\overline{\mathcal{W}}_{i})-u_ka_{k,u_k}H(\overline{\mathcal{X}}^n_{u_k+1}|\overline{\mathcal{W}}_{u_k+1}).\label{eqn:Lkfirst}
\end{align}
\end{figure*}
\setcounter{equation}{\value{MYtempeqncnt}}
\addtocounter{equation}{-4}

Define the following quantity for $k=1,2,\ldots,|\mathcal{E}|$, for which lower bounds will be derived
\begin{align}
L_k&\triangleq\sum_{j=e_{k}}^{u_k-1}a_{k,j} \sum_{i=1}^KH(\mathcal{X}^n_i|\overline{\mathcal{X}}^n_{j+1}\overline{\mathcal{W}}_{j+1})\notag\\
&\qquad\qquad\qquad+a_{k,u_k}\sum_{i=1}^KH(\mathcal{X}^n_i|\overline{\mathcal{X}}^n_{u_k+1}\overline{\mathcal{W}}_{u_k+1}),
\end{align} 
where
\begin{align}
a_{k,j}&\triangleq \frac{\phi_{j,e_{k}}}{\binom{K-1}{j-1}}-\frac{\phi_{j+1,e_{k}}}{\binom{K-1}{j}},\,j=e_k,\ldots,u_k-1\notag\\
a_{k,u_k}&\triangleq\frac{\phi_{u_k,e_{k}}}{\binom{K-1}{u_k-1}}-\frac{A_{e_{k+1}}\phi_{u_k,e_{k+1}}}{A_{e_{k}}\binom{K-1}{u_k-1}},
\end{align}
and for convenience we have defined $A_{e_{|\mathcal{E}|+1}}\triangleq 0$ and $\phi_{j,e_{|\mathcal{E}|+1}}\triangleq 0$. Note that all the coefficients in front of the entropy functions are non-negative: those in the first summation are straightforward to verify by using the definition of $\phi_{i,j}$, and for the last term we only need to observe that $A_{e_{k}}\phi_{u_k,e_{k}}\geq A_{e_{k+1}}\phi_{u_k,e_{k+1}}$ by the optimality of the extremal solution. For convenience let us also define
\begin{align}
b_{k,j}\triangleq\frac{\phi_{j,e_{k}}}{\binom{K-1}{j-1}}-\frac{A_{e_{k+1}}\phi_{u_k,e_{k+1}}}{A_{e_{k}}\binom{K-1}{u_k-1}}=\sum_{i=j}^{u_k}a_{k,i},\,j=e_k,\ldots,u_k,
\end{align}
which are clearly non-negative quantities. We are interested in these quantities $L_k$'s because they are directly related with the rate combination being considered, as we shall see shortly.

We start by writing the following
\begin{align}
&\sum_{i=1}^KH(\mathcal{X}^n_i|\overline{\mathcal{X}}^n_{j+1}\overline{\mathcal{W}}_{j+1})\notag\\
&=\sum_{i=1}^KH(\mathcal{X}^n_i|\overline{\mathcal{X}}^n_{j+1}\overline{\mathcal{W}}_{j+1})+KH(\overline{\mathcal{X}}^n_{j+1}|\overline{\mathcal{W}}_{j+1})\notag\\
&\qquad\qquad\qquad\qquad-KH(\overline{\mathcal{X}}^n_{j+1}|\overline{\mathcal{W}}_{j+1})\notag\\
&=\sum_{i=1}^KH(\mathcal{X}^n_i,\overline{\mathcal{X}}^n_{j+1}|\overline{\mathcal{W}}_{j+1})-KH(\overline{\mathcal{X}}^n_{j+1}|\overline{\mathcal{W}}_{j+1})\notag\\
&\stackrel{(a)}{\geq}\sum_{i=1}^KH(\mathcal{W}_i|\overline{\mathcal{W}}_{j+1})+\sum_{i=1}^KH(\mathcal{X}^n_i,\overline{\mathcal{X}}^n_{j+1}|\overline{\mathcal{W}}_{j+1},\mathcal{W}_i)\notag\\
&\qquad\qquad\qquad\qquad-KH(\overline{\mathcal{X}}^n_{j+1}|\overline{\mathcal{W}}_{j+1})-n\delta\notag\\
&\stackrel{(b)}{\geq} \sum_{i=1}^KH(\mathcal{W}_i|\overline{\mathcal{W}}_{j+1})+\sum_{i=1}^jH(\overline{\mathcal{X}}^n_{i}|\overline{\mathcal{W}}_{i})\notag\\
&\qquad\qquad\qquad\qquad-jH(\overline{\mathcal{X}}^n_{j+1}|\overline{\mathcal{W}}_{j+1})-n\delta,\label{eqn:decomposedLk}
\end{align}
where (a) is by Fano's inequality, and (b) is by applying Lemma \ref{lemma:Kwaysubmodularity} on the second term. For notational simplicity, we shall ignore the small quantity $\delta$ in the sequel.

\begin{figure*}[tb]
\normalsize
\addtocounter{equation}{1}
\setcounter{MYtempeqncnt}{\value{equation}}
\begin{align}
L_k\geq& \sum_{j=e_{k}}^{m}a_{k,j} \sum_{i=1}^KH(\mathcal{X}^n_i|\overline{\mathcal{X}}^n_{j+1}\overline{\mathcal{W}}_{j+1})+nK\sum_{j=m+1}^{u_k}\left(\phi_{j,e_{k}}-\frac{A_{e_{k+1}}\phi_{j,e_{k+1}}}{A_{e_{k}}}\right)R_{j}+b_{k,m+1}\sum_{i=1}^KH(\mathcal{W}_i|\overline{\mathcal{W}}_{m+1})\notag\\
&+b_{k,m+1}\sum_{i=1}^{m+1}H(\overline{\mathcal{X}}^n_{i}|\overline{\mathcal{W}}_{i})-u_ka_{k,u_k}H(\overline{\mathcal{X}}^n_{u_k+1}|\overline{\mathcal{W}}_{u_k+1})-\frac{A_{e_{k+1}}\phi_{u_k,e_{k+1}}}{A_{e_{k}}\binom{K-1}{u_k-1}}\sum_{i=m+1}^{u_k-1}H(\overline{\mathcal{X}}^n_{i+1}|\overline{\mathcal{W}}_{i+1}).\label{eqn:Lkgenerala}
\end{align}
\hrulefill
\end{figure*}
\setcounter{equation}{\value{MYtempeqncnt}}
\addtocounter{equation}{1}
\begin{figure*}[tb]
\normalsize
\setcounter{MYtempeqncnt}{\value{equation}}
\begin{align}
L_k\geq& \sum_{j=e_{k}}^{m^*-1}a_{k,j} \sum_{i=1}^KH(\mathcal{X}^n_i|\overline{\mathcal{X}}^n_{j+1}\overline{\mathcal{W}}_{j+1})+a_{k,m^*}\sum_{i=1}^KH(\mathcal{W}_i|\overline{\mathcal{W}}_{m^*+1})+a_{k,m^*}\sum_{i=1}^{m^*}H(\overline{\mathcal{X}}^n_{i}|\overline{\mathcal{W}}_{i})\notag\\
&-a_{k,m^*}m^*H(\overline{\mathcal{X}}^n_{m^*+1}|\overline{\mathcal{W}}_{m^*+1})+nK\sum_{j=m^*+1}^{uk}\left(\phi_{j,e_{k}}-\frac{A_{e_{k+1}}\phi_{j,e_{k+1}}}{A_{e_{k}}}\right)R_{j}+b_{k,m^*+1}\sum_{i=1}^KH(\mathcal{W}_i|\overline{\mathcal{W}}_{m^*+1})\notag\\
&+b_{k,m^*+1}\sum_{i=1}^{m^*+1}H(\overline{\mathcal{X}}^n_{i}|\overline{\mathcal{W}}_{i})-u_ka_{k,u_k}H(\overline{\mathcal{X}}^n_{u_k+1}|\overline{\mathcal{W}}_{u_k+1})-\frac{A_{e_{k+1}}\phi_{u_k,e_{k+1}}}{A_{e_{k}}\binom{K-1}{u_k-1}}\sum_{i=m^*+1}^{u_k-1}H(\overline{\mathcal{X}}^n_{i+1}|\overline{\mathcal{W}}_{i+1}).\label{eqn:Lkinduction1}
\end{align}
\hrulefill
\end{figure*}
\setcounter{equation}{\value{MYtempeqncnt}}

\begin{figure*}[tb]
\normalsize
\addtocounter{equation}{5}
\setcounter{MYtempeqncnt}{\value{equation}}
\begin{align}
L_k\geq& \sum_{j=e_{k}}^{m^*-1}a_{k,j} 
\sum_{i=1}^KH(\mathcal{X}^n_i|\overline{\mathcal{X}}^n_{j+1}\overline{\mathcal{W}}_{j+1})+nK\sum_{j=m^*}^{u_k}\left(\phi_{j,e_{k}}-\frac{A_{e_{k+1}}\phi_{j,e_{k+1}}}{A_{e_{k}}}\right)R_{j}+b_{k,m^*}\sum_{i=1}^KH(\mathcal{W}_i|\overline{\mathcal{W}}_{m^*})\notag\\
&+b_{k,m^*}\sum_{i=1}^{m^*}H(\overline{\mathcal{X}}^n_{i}|\overline{\mathcal{W}}_{i})-u_ka_{k,u_k}H(\overline{\mathcal{X}}^n_{u_k+1}|\overline{\mathcal{W}}_{u_k+1})-\frac{A_{e_{k+1}}\phi_{u_k,e_{k+1}}}{A_{e_{k}}\binom{K-1}{u_k-1}}\sum_{i=m^*}^{u_k-1}H(\overline{\mathcal{X}}^n_{i+1}|\overline{\mathcal{W}}_{i+1}),\label{eqn:Lkinduction4}
\end{align}
\hrulefill
\end{figure*}
\setcounter{equation}{\value{MYtempeqncnt}}

\begin{figure*}[tb]
\normalsize
\addtocounter{equation}{1}
\setcounter{MYtempeqncnt}{\value{equation}}
\begin{align}
L_k\geq& nK\sum_{j=e_k}^{u_k}\left(\phi_{j,e_{k}}-\frac{A_{e_{k+1}}\phi_{j,e_{k+1}}}{A_{e_{k}}}\right)R_{j}+b_{k,e_k}\sum_{i=1}^KH(\mathcal{W}_i|\overline{\mathcal{W}}_{e_k})+b_{k,e_k}\sum_{i=1}^{e_k}H(\overline{\mathcal{X}}^n_{i}|\overline{\mathcal{W}}_{i})\notag\\
&-u_ka_{k,u_k}H(\overline{\mathcal{X}}^n_{u_k+1}|\overline{\mathcal{W}}_{u_k+1})-\frac{A_{e_{k+1}}\phi_{u_k,e_{k+1}}}{A_{e_{k}}\binom{K-1}{u_k-1}}\sum_{i=e_k}^{u_k-1}H(\overline{\mathcal{X}}^n_{i+1}|\overline{\mathcal{W}}_{i+1}).
\label{eqn:newlabel1}
\end{align}
\hrulefill
\end{figure*}
\setcounter{equation}{\value{MYtempeqncnt}}
\addtocounter{equation}{-5}

Slightly further expanding the first term in (\ref{eqn:decomposedLk}) and substituting it in $L_k$ give us (\ref{eqn:Lkfirst}). 
More generally, we claim that for $m$ such that $u_k-1\geq m\geq e_k-1$, (\ref{eqn:Lkgenerala}) holds, which we prove by induction. Clearly it holds for $m=u_k-1$ since it is exactly (\ref{eqn:Lkfirst}) in this case. Suppose it holds for $m=m^*$, we shall prove it also holds for $m=m^*-1$. Putting
 (\ref{eqn:decomposedLk}) into (\ref{eqn:Lkgenerala}), we have (\ref{eqn:Lkinduction1}) given on the next page. In order to simplify (\ref{eqn:Lkinduction1}), first notice that $a_{k,m^*}+b_{k,m^*+1}=b_{k,m^*}$, and 
\begin{align}
b_{k,m^*}\binom{K-1}{m^*-1}&=\phi_{m^*,e_{k}}-\frac{A_{e_{k+1}}\phi_{u_k,e_{k+1}}\binom{K-1}{m^*-1}}{A_{e_{k}}\binom{K-1}{u_k-1}}\notag\\
&\stackrel{(a)}{=}\phi_{m^*,e_{k}}-\frac{A_{e_{k+1}}\phi_{u_k,e_{k+1}}\phi_{m^*,u_k}}{A_{e_{k}}}\notag\\
&\stackrel{(b)}{=}\phi_{m^*,e_{k}}-\frac{A_{e_{k+1}}\phi_{m^*,e_{k+1}}}{A_{e_{k}}},\label{eqn:difference1}
\end{align}
where (a) is by Lemma \ref{lemma:alternativephijk} and (b) is by Lemma \ref{lemma:increasingijk}. It follows that
\begin{align}
&a_{k,m^*}\sum_{i=1}^KH(\mathcal{W}_i|\overline{\mathcal{W}}_{m^*+1})+b_{k,m^*+1}\sum_{i=1}^KH(\mathcal{W}_i|\overline{\mathcal{W}}_{m^*+1})\notag\\
&=b_{k,m^*}\sum_{i=1}^KH(\mathcal{W}_i|\overline{\mathcal{W}}_{m^*})+nKb_{k,m^*}\binom{K-1}{m^*-1}R_{m^*},\notag\\
&=b_{k,m^*}\sum_{i=1}^KH(\mathcal{W}_i|\overline{\mathcal{W}}_{m^*})\notag\\
&\qquad\qquad+nK\left(\phi_{m^*,e_{k}}-\frac{A_{e_{k+1}}\phi_{m^*,e_{k+1}}}{A_{e_{k}}}\right)R_{m^*}.
\label{eqn:Lkinduction2}
\end{align}
Furthermore, notice that
\begin{align}
&a_{k,m^*}\sum_{i=1}^{m^*}H(\overline{\mathcal{X}}^n_{i}|\overline{\mathcal{W}}_{i})-a_{k,m^*}m^*H(\overline{\mathcal{X}}^n_{m^*+1}|\overline{\mathcal{W}}_{m^*+1})\notag\\
&\qquad\qquad+b_{k,m^*+1}\sum_{i=1}^{m^*+1}H(\overline{\mathcal{X}}^n_{i}|\overline{\mathcal{W}}_{i})\notag\\
&=b_{k,m^*}\sum_{i=1}^{m^*}H(\overline{\mathcal{X}}^n_{i}|\overline{\mathcal{W}}_{i})\notag\\
&\qquad\qquad+(b_{k,m^*+1}-a_{k,m^*}m^*)H(\overline{\mathcal{X}}^n_{m^*+1}|\overline{\mathcal{W}}_{m^*+1})\notag\\
&=b_{k,m^*}\sum_{i=1}^{m^*}H(\overline{\mathcal{X}}^n_{i}|\overline{\mathcal{W}}_{i})\notag\\
&\qquad\qquad-\frac{A_{e_{k+1}}\phi_{u_k,e_{k+1}}}{A_{e_{k}}\binom{K-1}{u_k-1}}H(\overline{\mathcal{X}}^n_{m^*+1}|\overline{\mathcal{W}}_{m^*+1}),\label{eqn:Lkinduction3}
\end{align}
where the last step is due to
\begin{align}
&b_{k,m^*+1}-a_{k,m^*}m^*\notag\\
&=(m^*+1)\frac{\phi_{m^*+1,e_{k}}}{\binom{K-1}{m^*}}-\frac{A_{e_{k+1}}\phi_{u_k,e_{k+1}}}{A_{e_{k}}\binom{K-1}{u_k-1}}-m^*\frac{\phi_{m^*,e_{k}}}{\binom{K-1}{m^*-1}}\notag\\
&=-\frac{A_{e_{k+1}}\phi_{u_k,e_{k+1}}}{A_{e_{k}}\binom{K-1}{u_k-1}},
\end{align}
where the last equality is by Lemma \ref{lemma:consequtivedown}.
Combining (\ref{eqn:Lkinduction1}), (\ref{eqn:Lkinduction2}) and (\ref{eqn:Lkinduction3}), we have (\ref{eqn:Lkinduction4}), proving that the claim (\ref{eqn:Lkgenerala}) is indeed true.

Letting $m=e_k-1$, we can write (\ref{eqn:newlabel1}) on this page. By breaking the second term as given in (\ref{eqn:difference2}),
\begin{figure*}[tb]
\normalsize
\addtocounter{equation}{2}
\setcounter{MYtempeqncnt}{\value{equation}}
\begin{align}
b_{k,e_k}\sum_{i=1}^KH(\mathcal{W}_i|\overline{\mathcal{W}}_{e_k})&=b_{k,e_k}nK\sum_{j=l_k}^{e_k-1}\binom{K-1}{j-1}R_j+b_{k,e_k}\sum_{i=1}^KH(\mathcal{W}_i|\overline{\mathcal{W}}_{l_k})\notag\\
&=nK\sum_{j=l_k}^{e_k-1}\left(\frac{\binom{K-1}{j-1}}{\binom{K-1}{e_k-1}}-\frac{A_{e_{k+1}}\phi_{u_k,e_{k+1}}\binom{K-1}{j-1}}{A_{e_{k}}\binom{K-1}{u_k-1}}\right)R_j+b_{k,e_k}\sum_{i=1}^KH(\mathcal{W}_i|\overline{\mathcal{W}}_{l_k})\notag\\
&=nK\sum_{j=l_k}^{e_k-1}\left(\phi_{j,e_{k}}-\frac{A_{e_{k+1}}\phi_{j,e_{k+1}}}{A_{e_k}}\right)R_j+b_{k,e_k}\sum_{i=1}^KH(\mathcal{W}_i|\overline{\mathcal{W}}_{l_k}),\label{eqn:difference2}
\end{align}
\hrulefill
\end{figure*}
\begin{figure*}[tb]
\normalsize
\setcounter{MYtempeqncnt}{\value{equation}}
\addtocounter{MYtempeqncnt}{1}
\begin{align}
L_k\geq& nK\sum_{j=l_k}^{u_k}\left(\phi_{j,e_{k}}-\frac{A_{e_{k+1}}\phi_{j,e_{k+1}}}{A_{e_{k}}}\right)R_{j}+b_{k,e_k}\sum_{i=1}^KH(\mathcal{W}_i|\overline{\mathcal{W}}_{l_k})+b_{k,e_k}\sum_{i=1}^{l_k}H(\overline{\mathcal{X}}^n_{i}|\overline{\mathcal{W}}_{i})\notag\\
&-u_ka_{k,u_k}H(\overline{\mathcal{X}}^n_{u_k+1}|\overline{\mathcal{W}}_{u_k+1})-\frac{A_{e_{k+1}}\phi_{u_k,e_{k+1}}}{A_{e_{k}}\binom{K-1}{u_k-1}}\sum_{i=e_k}^{u_k-1}H(\overline{\mathcal{X}}^n_{i+1}|\overline{\mathcal{W}}_{i+1}).\label{eqn:Lkgeneral}
\end{align}
\hrulefill
\end{figure*}
where in the last step we apply Lemma \ref{lemma:alternativephijk} and Lemma \ref{lemma:increasingijk}, 
and noticing that for the third term
\begin{align*}
b_{k,e_k}\sum_{i=1}^{e_k}H(\overline{\mathcal{X}}^n_{i}|\overline{\mathcal{W}}_{i})\geq b_{k,e_k}\sum_{i=1}^{l_k}H(\overline{\mathcal{X}}^n_{i}|\overline{\mathcal{W}}_{i}),
\end{align*}
implied by the discrete nature of the channel, we can further write (\ref{eqn:Lkgeneral}) on the next page.

This concludes the inner layer induction, and next we turn to the outer layer. First notice that the optimality of extremal solution and Lemma \ref{lemma:quantization} together imply that
\begin{align}
B_{C^*}(\vec{A})&=\sum_{i=1}^{|\mathcal{E}|}A_{e_i}\sum_{j\in \mathcal{S}_i}\phi_{j,e_i}R^*_j.
\end{align}

\begin{figure*}[tb]
\normalsize
\setcounter{MYtempeqncnt}{\value{equation}}
\addtocounter{MYtempeqncnt}{1}
\begin{align}
B_{C^*}(\vec{A})\geq&
\sum_{i=1}^{|\mathcal{E}|-1}\sum_{j\in \mathcal{S}_i}\left[A_{e_i}\phi_{j,e_i}-A_{e_{|\mathcal{E}|}}\phi_{j,e_{|\mathcal{E}|}}\right]R^*_j+\frac{A_{e_{|\mathcal{E}|}}}{nK}\sum_{j=e_{|\mathcal{E}|}}^K \left(\frac{\phi_{j,e_{|\mathcal{E}|}}}{\binom{K-1}{j-1}}-\frac{\phi_{j+1,e_{|\mathcal{E}|}}}{\binom{K-1}{j}}\right)\sum_{i=1}^KH(\mathcal{X}^n_i|\overline{\mathcal{X}}^n_{j+1}\overline{\mathcal{W}}_{j+1})\notag\\
=&\sum_{i=1}^{|\mathcal{E}|-1}\sum_{j\in \mathcal{S}_i}\left[A_{e_i}\phi_{j,e_i}-A_{e_{|\mathcal{E}|}}\phi_{j,e_{|\mathcal{E}|}}\right]R^*_j+\frac{A_{e_{|\mathcal{E}|}}}{nK}L_{|\mathcal{E}|},\label{eqn:expansion}
\end{align}
\hrulefill
\end{figure*}
We first write (\ref{eqn:expansion}) where the inequality can be justified as follows. 
Observe that in the second summation, for any $k>e_{|\mathcal{E}|}$, the random variables $X_{\mathcal{A}}$ with $|\mathcal{A}|=k$ appear only in the last $K-k+1$ terms in the outer summation. Each inner summation has a total of $K\binom{K-1}{j-1}$ such terms, which implies such random variables are counted a total of $A_{e_{|\mathcal{E}|}}K\phi_{j,e_{|\mathcal{E}|}}$ times. Thus by the cardinality of the alphabets, the normalized entropy is upper bounded by $R^*_{k}$. Through a similar argument, it is not difficult to verify that for $k\leq e_{|\mathcal{E}|}$, all the terms are accounted for. Furthermore, notice that by the optimality of the extremal solution, for any $j\in\mathcal{S}_i$, we have $A_{e_i}\phi_{j,e_i}\geq A_{e_{|\mathcal{E}|}}\phi_{j,e_{|\mathcal{E}|}}$, and thus the first summation is non-negative.

\begin{figure*}[tb]
\normalsize
\setcounter{MYtempeqncnt}{\value{equation}}
\addtocounter{MYtempeqncnt}{1}
\begin{align}
B_{C^*}(\vec{A})\geq&\sum_{i=1}^{|\mathcal{E}|-1}\sum_{j\in \mathcal{S}_i}\left[A_{e_i}\phi_{j,e_i}-A_{e_{|\mathcal{E}|}}\phi_{j,e_{|\mathcal{E}|}}\right]R^*_j+A_{e_{|\mathcal{E}|}}\sum_{j=l_{|\mathcal{E}|}}^{u_{|\mathcal{E}|}}\left(\phi_{j,e_{|\mathcal{E}|}}-\frac{A_{e_{|\mathcal{E}|+1}}\phi_{j,e_{|\mathcal{E}|+1}}}{A_{e_{|\mathcal{E}|}}}\right)R_{j}\notag\\
&+\frac{A_{e_{|\mathcal{E}|}}}{nK}b_{{|\mathcal{E}|},e_{|\mathcal{E}|}}\left(\sum_{i=1}^KH(\mathcal{W}_i|\overline{\mathcal{W}}_{l_{|\mathcal{E}|}})+\sum_{i=1}^{l_{|\mathcal{E}|}}H(\overline{\mathcal{X}}^n_{i}|\overline{\mathcal{W}}_{i})\right)-\frac{A_{e_{|\mathcal{E}|}}}{nK}u_{|\mathcal{E}|}a_{{|\mathcal{E}|},u_{|\mathcal{E}|}}H(\overline{\mathcal{X}}^n_{u_{|\mathcal{E}|}+1}|\overline{\mathcal{W}}_{u_{|\mathcal{E}|}+1})\notag\\
&-\frac{A_{e_{|\mathcal{E}|}}}{nK}\frac{A_{e_{{|\mathcal{E}|}+1}}\phi_{u_{|\mathcal{E}|},e_{{|\mathcal{E}|}+1}}}{A_{e_{|\mathcal{E}|}}\binom{K-1}{u_{|\mathcal{E}|}-1}}\sum_{i=e_{|\mathcal{E}|}}^{u_{|\mathcal{E}|}-1}H(\overline{\mathcal{X}}^n_{i+1}|\overline{\mathcal{W}}_{i+1})\notag\\
=&\sum_{i=1}^{|\mathcal{E}|-1}\sum_{j\in \mathcal{S}_i}\left[A_{e_i}\phi_{j,e_i}-A_{e_{|\mathcal{E}|}}\phi_{j,e_{|\mathcal{E}|}}\right]R^*_j+A_{e_{|\mathcal{E}|}}\sum_{j=l_{|\mathcal{E}|}}^K\phi_{j,e_{|\mathcal{E}|}}R_j+\frac{A_{e_{|\mathcal{E}|}}}{nK\binom{K-1}{e_{|\mathcal{E}|}-1}}\sum_{i=1}^KH(\mathcal{W}_i|\overline{\mathcal{W}}_{l_{|\mathcal{E}|}})\notag\\
&+
\frac{A_{e_{|\mathcal{E}|}}}{nK\binom{K-1}{e_{|\mathcal{E}|}-1}}\sum_{i=1}^{l_{|\mathcal{E}|}}H(\overline{\mathcal{X}}^n_{i}|\overline{\mathcal{W}}_{i})
,\label{eqn:outerinduction1}
\end{align}
\hrulefill
\end{figure*}
We next apply (\ref{eqn:Lkgeneral}) with $k=|\mathcal{E}|$ in (\ref{eqn:expansion}), and write (\ref{eqn:outerinduction1}) on the next page, because we have $u_{|\mathcal{E}|}=K$, $A_{e_{|\mathcal{E}|+1}}=0$ by definition, and 
\begin{align}
b_{|\mathcal{E}|,e_{|\mathcal{E}|}}=\frac{\phi_{e_{|\mathcal{E}|},e_{|\mathcal{E}|}}}{\binom{K-1}{e_{|\mathcal{E}|
}-1}}=\frac{1}{\binom{K-1}{e_{|\mathcal{E}|}-1}}.
\end{align}

More generally, we claim for $k=0,1,\ldots,|\mathcal{E}|-1$, the following inequality holds
\begin{align}
&B_{C^*}(\vec{A})\geq \sum_{i=1}^{k}\sum_{j\in \mathcal{S}_i}\left[A_{e_i}\phi_{j,e_i}-A_{e_{k+1}}\phi_{j,e_{k+1}}\right]R^*_j\notag\\
&\,+\sum_{i=k+1}^{|\mathcal{E}|}A_{e_i}\sum_{j=l_i}^{u_i}\phi_{j,e_{i}}R_j+\frac{A_{e_{k+1}}}{nK\binom{K-1}{e_{k+1}-1}}\sum_{i=1}^KH(\mathcal{W}_i|\overline{\mathcal{W}}_{l_{k+1}})\notag\\
&\qquad+
\frac{A_{e_{k+1}}}{nK\binom{K-1}{e_{k+1}-1}}\sum_{i=1}^{l_{k+1}}H(\overline{\mathcal{X}}^n_{i}|\overline{\mathcal{W}}_{i})\label{eqn:outergeneral}
\end{align}

\begin{figure*}[tb]
\normalsize
\addtocounter{equation}{4}
\setcounter{MYtempeqncnt}{\value{equation}}
\begin{align}
B_{C^*}(\vec{A})&\geq \sum_{i=1}^{k^*-1}\sum_{j\in \mathcal{S}_i}c_{j,k^*-1}R^*_j+A_{e_{k^*}}\sum_{j=l_{k^*}}^{u_{k^*}}\left(\phi_{j,e_{k^*}}-\frac{A_{e_{k^*+1}}\phi_{j,e_{k^*+1}}}{A_{e_{k^*}}}\right)R_{j}\notag\\
&+\frac{A_{e_{k^*}}}{nK}b_{k^*,e_{k^*}}\left(\sum_{i=1}^KH(\mathcal{W}_i|\overline{\mathcal{W}}_{l_{k^*}})+\sum_{i=1}^{l_{k^*}}H(\overline{\mathcal{X}}^n_{i}|\overline{\mathcal{W}}_{i})\right)-\frac{A_{e_{k^*}}}{nK}u_{k^*}a_{k^*,u_{k^*}}H(\overline{\mathcal{X}}^n_{u_{k^*}+1}|\overline{\mathcal{W}}_{u_{k^*}+1})\notag\\
&-\frac{A_{e_{{k^*}+1}}\phi_{u_{k^*},e_{{k^*}+1}}}{nK\binom{K-1}{u_{k^*}-1}}\sum_{i=e_{k^*}}^{u_{k^*}-1}H(\overline{\mathcal{X}}^n_{i+1}|\overline{\mathcal{W}}_{i+1})+\sum_{i=k^*+1}^{|\mathcal{E}|}A_{e_i}\sum_{j=l_i}^{u_i}\phi_{j,e_{i}}R_j\notag\\
&+\frac{A_{e_{k^*+1}}}{nK\binom{K-1}{e_{k^*+1}-1}}\sum_{i=1}^KH(\mathcal{W}_i|\overline{\mathcal{W}}_{l_{k^*+1}})+\frac{A_{e_{k^*+1}}}{nK\binom{K-1}{e_{k^*+1}-1}}\sum_{i=1}^{l_{k^*+1}}H(\overline{\mathcal{X}}^n_{i}|\overline{\mathcal{W}}_{i}).\label{eqn:outerind1}
\end{align}
\hrulefill
\end{figure*}
\setcounter{equation}{\value{MYtempeqncnt}}
\begin{figure*}[tb]
\normalsize
\addtocounter{equation}{1}
\setcounter{MYtempeqncnt}{\value{equation}}
\begin{align}
&A_{e_{k^*}}\sum_{j=l_{k^*}}^{u_{k^*}}\left(\phi_{j,e_{k^*}}-\frac{A_{e_{k^*+1}}\phi_{j,e_{k^*+1}}}{A_{e_{k^*}}}\right)R_{j}+\frac{A_{e_{k^*}}}{nK}b_{k^*,e_{k^*}}\sum_{i=1}^KH(\mathcal{W}_i|\overline{\mathcal{W}}_{l_{k^*}})+\sum_{i=k^*+1}^{|\mathcal{E}|}A_{e_i}\sum_{j=l_i}^{u_i}\phi_{j,e_{i}}R_j\notag\\
&\qquad+\frac{A_{e_{k^*+1}}}{nK\binom{K-1}{e_{k^*+1}-1}}\sum_{i=1}^KH(\mathcal{W}_i|\overline{\mathcal{W}}_{l_{k^*+1}})\notag\\
&=\sum_{j=l_{k^*}}^{u_{k^*}}\left(A_{e_{k^*}}\phi_{j,e_{k^*}}-A_{e_{k^*+1}}\phi_{j,e_{k^*+1}}\right)R_{j}+\frac{A_{e_{k^*}}}{nK}b_{k^*,e_{k^*}}\sum_{i=1}^KH(\mathcal{W}_i|\overline{\mathcal{W}}_{l_{k^*}})+\sum_{i=k^*+1}^{|\mathcal{E}|}A_{e_i}\sum_{j=l_i}^{u_i}\phi_{j,e_{i}}R_j\notag\\
&\qquad+\frac{A_{e_{k^*+1}}}{nK\binom{K-1}{e_{k^*+1}-1}}\sum_{i=1}^KH(\mathcal{W}_i|\overline{\mathcal{W}}_{l_{k^*}})+\sum_{j=l_{k^*}}^{u_{k^*}}\frac{A_{e_{k^*+1}}}{\binom{K-1}{e_{k^*+1}-1}}\binom{K-1}{j-1}R_{j}\notag\\
&\stackrel{(a)}{=}\sum_{i=k^*}^{|\mathcal{E}|}A_{e_i}\sum_{j=l_i}^{u_i}\phi_{j,e_{i}}R_j+\left(\frac{A_{e_{k^*}}}{nK}b_{k^*,e_{k^*}}+\frac{A_{e_{k^*+1}}}{nK\binom{K-1}{e_{k^*+1}-1}}\right)\sum_{i=1}^KH(\mathcal{W}_i|\overline{\mathcal{W}}_{l_{k^*}})\notag\\
&\stackrel{(b)}{=}\sum_{i=k^*}^{|\mathcal{E}|}A_{e_i}\sum_{j=l_i}^{u_i}\phi_{j,e_{i}}R_j+\frac{A_{e_{k^*}}}{nK\binom{K-1}{e_{k^*}-1}}\sum_{i=1}^KH(\mathcal{W}_i|\overline{\mathcal{W}}_{l_{k^*}}),
\label{eqn:outerind2}
\end{align}
\hrulefill
\end{figure*}
\setcounter{equation}{\value{MYtempeqncnt}}
\addtocounter{equation}{-5}

We again take an induction approach to prove this claim. The claim is clearly true for $k=|\mathcal{E}|-1$. Now suppose (\ref{eqn:outergeneral}) is true for $k=k^*$, and we seek to show it is also true for $k=k^*-1$. 
For notational simplicity, let us define
\begin{align}
c_{j,k}=A_{e_i}\phi_{j,e_i}-A_{e_{k+1}}\phi_{j,e_{k+1}}.
\end{align}
We first prove the following inequality
\begin{align}
&\sum_{i=1}^{k^*}\sum_{j\in \mathcal{S}_i}c_{j,k^*}R^*_j\geq\sum_{i=1}^{k^*-1}\sum_{j\in \mathcal{S}_i}c_{j,k^*-1}R^*_j+\frac{A_{e_{k^*}}}{nK}L_{k^*}.\label{eqn:accountrate}
\end{align}
To do this, we need to count in the second term the number of appearance of random variables $X_{\mathcal{A}}$ for all $|\mathcal{A}|=m$, for all fixed $m$, such that $m\in \mathcal{I}_{u_{k^*}}$. This is similar to (\ref{eqn:expansion}), but slightly more involved. For $m$ such that $u_{k^*}\geq m>e_{k^*}$, it is easily seen that there are a total of $b_{k^*,m}K\binom{K-1}{m-1}$ such random variables in $L_{k^*}$, implying the following amount of $R^*_m$ is accounted for 
\begin{align}
&\frac{A_{e_{k^*}}}{K}b_{k^*,m}K\binom{K-1}{m-1}=A_{e_{k^*}}\phi_{m,e_{k^*}}-A_{e_{k^*+1}}\phi_{m,e_{k^*+1}},
\end{align} 
where we have used (\ref{eqn:difference1}). This indeed is the difference between the left hand side of (\ref{eqn:accountrate}) and the first term on the right hand side, in terms of $R^*_m$. For the case $m\leq e_{k^*}$, the following amount of $R^*_m$ is accounted for
\begin{align}
&\frac{A_{e_{k^*}}}{K}b_{k^*,e_{k^*}}K\binom{K-1}{m-1}=A_{e_{k^*}}\phi_{m,e^{k^*}}-A_{e_{k^*+1}}\phi_{m,e_{k^*+1}},
\end{align}
where we have used the derivation in (\ref{eqn:difference2}). This is again precisely the difference between the left hand side of (\ref{eqn:accountrate}) and the first term on the right hand side, in terms of $R^*_m$. Thus (\ref{eqn:accountrate}) is indeed true. 

Now we proceed with the proof of (\ref{eqn:outergeneral}) through induction by assuming it holds for $k=k^*$, and write (\ref{eqn:outerind1}) on the top of this page by applying (\ref{eqn:Lkgeneral}). In order to simplify (\ref{eqn:outerind1}), similar terms need to be combined, for which  we write (\ref{eqn:outerind2}),
where in (a) we used Lemma \ref{lemma:alternativephijk}, and (b) is because
\begin{align}
\addtocounter{equation}{2}
&\frac{A_{e_{k^*}}}{nK}b_{e_{k^*},e_{k^*}}+\frac{A_{e_{k^*+1}}}{nK\binom{K-1}{e_{k^*+1}-1}}\notag\\
&=\frac{A_{e_{k^*}}}{nK\binom{K-1}{k^*-1}}-\frac{A_{e_{k^*+1}}\phi_{u_k,e_{k^*+1}}}{nK\binom{K-1}{u_{k^*}-1}}+\frac{A_{e_{k^*+1}}}{nK\binom{K-1}{e_{k^*+1}-1}}\notag\\
&=\frac{A_{e_{k^*}}}{nK\binom{K-1}{k^*-1}}-\frac{A_{e_{k^*+1}}}{nK\binom{K-1}{u_{k^*}-1}}\left(\phi_{u_k,e_{k^*+1}}-\frac{\binom{K-1}{u_{k^*}-1}}{\binom{K-1}{e_{k^*+1}-1}}\right)\notag\\
&=\frac{A_{e_{k^*}}}{nK\binom{K-1}{k^*-1}}, \label{eqn:summationmiddle}
\end{align}
where the last step is again by Lemma \ref{lemma:alternativephijk}.

\begin{figure*}[tb]
\normalsize
\setcounter{MYtempeqncnt}{\value{equation}}
\addtocounter{MYtempeqncnt}{1}
\begin{align}
&\frac{A_{e_{k^*}}}{nK}b_{k^*,e_{k^*}}\sum_{i=1}^{l_{k^*}}H(\overline{\mathcal{X}}^n_{i}|\overline{\mathcal{W}}_{i})-\frac{A_{e_{k^*}}}{nK}u_{k^*}a_{k^*,u_{k^*}}H(\overline{\mathcal{X}}^n_{u_{k^*}+1}|\overline{\mathcal{W}}_{u_{k^*}+1})\notag\\
&\qquad-\frac{A_{e_{{k^*}+1}}\phi_{u_{k^*},e_{{k^*}+1}}}{nK\binom{K-1}{u_{k^*}-1}}\sum_{i=e_{k^*}}^{u_{k^*}-1}H(\overline{\mathcal{X}}^n_{i+1}|\overline{\mathcal{W}}_{i+1})+\frac{A_{e_{k^*+1}}}{nK\binom{K-1}{e_{k^*+1}-1}}\sum_{i=1}^{l_{k^*+1}}H(\overline{\mathcal{X}}^n_{i}|\overline{\mathcal{W}}_{i})\notag\\
&\geq\frac{A_{e_{k^*}}}{nK\binom{K-1}{k^*-1}}\sum_{i=1}^{l_{k^*}}H(\overline{\mathcal{X}}^n_{i}|\overline{\mathcal{W}}_{i})+\left(\frac{A_{e_{k^*+1}}}{nK\binom{K-1}{e_{k^*+1}-1}}-\frac{A_{e_{k^*}}}{nK}u_{k^*}a_{k^*,u_{k^*}}\right)H(\overline{\mathcal{X}}^n_{u_{k^*}+1}|\overline{\mathcal{W}}_{u_{k^*}+1})\notag\\
&\qquad+\left(\frac{A_{e_{k^*+1}}}{nK\binom{K-1}{e_{k^*+1}-1}}-\frac{A_{e_{{k^*}+1}}\phi_{u_{k^*},e_{{k^*}+1}}}{nK\binom{K-1}{u_{k^*}-1}}\right)\sum_{i=e_{k^*}}^{u_{k^*}-1}H(\overline{\mathcal{X}}^n_{i+1}|\overline{\mathcal{W}}_{i+1}),
\label{eqn:outerind3}
\end{align}
\hrulefill
\end{figure*}
\setcounter{equation}{\value{MYtempeqncnt}}

Next consider the summation (\ref{eqn:outerind3}) where we have split the last term and combined it with the other terms, and used (\ref{eqn:summationmiddle}); moreover, some terms $H(\overline{\mathcal{X}}^n_{i}|\overline{\mathcal{W}}_{i})$ for $i=l_{k^*}+1,\ldots,e_{k^*}$ are ignored because they are non-negative by the discrete nature of the channel. Observe that for the last term in the right hand side of (\ref{eqn:outerind3})
\begin{align}
&\frac{A_{e_{k^*+1}}}{nK\binom{K-1}{e_{k^*+1}-1}}-\frac{A_{e_{{k^*}+1}}\phi_{u_{k^*},e_{{k^*}+1}}}{nK\binom{K-1}{u_{k^*}-1}}\notag\\
&=\frac{A_{e_{k^*+1}}\phi_{u_{k^*},e_{k^*+1}}-A_{e_{{k^*}+1}}\phi_{u_{k^*},e_{{k^*}+1}}}{nK\binom{K-1}{u_{k^*}-1}}=0.
\label{eqn:outerind4}
\end{align}
For the second term in the right hand side of (\ref{eqn:outerind3}), notice that $u_{k^*}+1\in \mathcal{S}_{k^*+1}$, thus by the optimality of the extremal solution, we have
\begin{align}
A_{e_{k^*+1}}\phi_{u_{k^*}+1,e_{k^*+1}}\geq A_{e_{k^*}}\phi_{u_{k^*}+1,e_{k^*}},\label{eqn:changepoint}
\end{align}
and thus (\ref{eqn:outerind5}) follows, where (a) is by Lemma \ref{lemma:circleij}, and the final inequality is by (\ref{eqn:changepoint}). 
Now combining (\ref{eqn:outerind1}), (\ref{eqn:outerind2}), (\ref{eqn:outerind3}), (\ref{eqn:outerind4}) and (\ref{eqn:outerind5}) completes the induction proof of (\ref{eqn:outergeneral}) for $k=k^*-1$. 
\begin{figure*}[tb]
\normalsize
\setcounter{MYtempeqncnt}{\value{equation}}
\addtocounter{MYtempeqncnt}{1}
\begin{align}
\frac{A_{e_{k^*+1}}}{nK\binom{K-1}{e_{k^*+1}-1}}-\frac{A_{e_{k^*}}}{nK}u_{k^*}a_{k^*,u_{k^*}}
&=\frac{A_{e_{k^*+1}}}{nK\binom{K-1}{e_{k^*+1}-1}}-\frac{A_{e_{k^*}}u_{k^*}}{nK}\left(\frac{\phi_{u_{k^*},e_{k^*}}}{\binom{K-1}{u_{k^*}-1}}-\frac{A_{e_{k^*+1}}\phi_{u_{k^*},e_{k^*+1}}}{A_{e_{k^*}}\binom{K-1}{u_{k^*}-1}}\right)\notag\\
&=\frac{(u_{k^*}+1)A_{e_{k^*+1}}\phi_{u_{k^*},e_{k^*+1}}-A_{e_{k^*}}u_{k^*}\phi_{u_{k^*},e_{k^*}}}{nK\binom{K-1}{u_{k^*}-1}}\notag\\
&\stackrel{(a)}{=}\frac{A_{e_{k^*+1}}\phi_{u_{k^*}+1,e_{k^*+1}}-A_{e_{k^*}}u_{k^*}\phi_{u_{k^*},e_{k^*}}\phi_{u_{k^*}+1,u_{k^*}}}{nK\binom{K-1}{u_{k^*}-1}\phi_{u_{k^*}+1,u_{k^*}}}\geq 0,\label{eqn:outerind5}
\end{align}
\hrulefill
\end{figure*}
\setcounter{equation}{\value{MYtempeqncnt}}

Writing (\ref{eqn:outergeneral}) for $k=0$, we have
\begin{align}
&B_{C^*}(\vec{A})\geq \sum_{i=1}^{|\mathcal{E}|}A_{e_i}\sum_{j=l_i}^{u_i}\phi_{j,e_{i}}R_j\notag\\
&\qquad+\frac{A_{e_{1}}}{nK\binom{K-1}{e_{1}-1}}\left(\sum_{i=1}^KH(\mathcal{W}_i|\overline{\mathcal{W}}_{l_{1}})+\sum_{i=1}^{l_{1}}H(\overline{\mathcal{X}}^n_{i}|\overline{\mathcal{W}}_{i})\right)\notag\\
&\qquad\geq \sum_{i=1}^{|\mathcal{E}|}A_{e_i}\sum_{j=l_i}^{u_i}\phi_{j,e_{i}}R_j,\label{eqn:final}
\end{align}
where the second inequality is because the first term in the parenthesis degenerates to zero, and the second is non-negative. Notice that for any $j\in \mathcal{S}_i$, by the optimality of the given extremal solution, $A_j\leq A_{e_{i}}\phi_{j,e_{i}}$, thus by the non-negativeness of rate $R_i$'s, we arrive at
\begin{align}
B_{C^*}(\vec{A})&\geq \sum_{i=1}^{|\mathcal{E}|}A_{e_i}\sum_{j=l_i}^{u_i}\phi_{j,e_{i}}R_j\geq \sum_{i=1}^{K}A_{i}R_i.
\end{align}
This completes the proof.
\end{proof}

For the case that $2^{R^*_i}$'s are not integers, we can instead consider a sequence of channels with memory, for which the alphabet sizes are $2^{nR^*_i}$, however, for each $n$ channel use, the channel erases $(n-1)$ of them. This channel is not a memoryless channel anymore, however, our definition is sufficiently general to include such a case, and the converse proof can be used without any change. 

\section{Conclusion}
We consider the latent capacity region of the symmetric broadcast problem, which gives the maximum implication region for a specific achievable rate vector. A complete characterization is provided, for which the converse proof relies on a deterministic channel model, and deriving upper bounds for any bounding plane of the rate region. The forward proof reveals an inherent connection between broadcast with common messages and erasure correction codes.

We believe the latent capacity region (or latent rate region) is a general concept, and can be applied to other problem. In \cite{GrokopTse:08}, the multiple access channel is also considered for two and three-user case. It is conceivable that the technique used in this work can be used to generalize their results for the multiple access channel. Another interesting case may be the interference channel, where the well-known Han-Kobayashi region \cite{HanKobayashi:81} is indeed the projection of a rate region for the coding problem with common messages. A careful analysis of the latent capacity region for the general interference channel may yield further insight into the problem.

\section*{Acknowledgment} 
The author wishes to thank the anonymous reviewers for their comments which help improve the presentation of this paper. 

\appendices

\section{Submodularity property of conditional entropy}
\label{append:Kwaysubmodularity}
\begin{lemma}\label{lemma:submodular}
Let $U_1,U_2,\ldots,U_N$ be a set of mutually independent random variables, and let $V_1,V_2,\ldots,V_N$ be $N$ random variables jointly distributed with them. Let $\mathcal{G}$ be a subset of $\mathcal{I}_N$, i.e., $\mathcal{G}\subseteq \mathcal{I}_N$. The conditional entropy function  $H_{V|U}(\mathcal{G})\triangleq H(V_i,i\in \mathcal{G}|U_i,i\in\mathcal{G})$ is a submodular function, i.e., for any  $\mathcal{G}_1, \mathcal{G}_2\subseteq \mathcal{I}_N$, 
\begin{align}
H_{V|U}(\mathcal{G}_1)+H_{V|U}(\mathcal{G}_2)\geq H_{V|U}(\mathcal{G}_1\cup \mathcal{G}_2)+H_{V|U}(\mathcal{G}_1\cap \mathcal{G}_2).\notag
\end{align}
\end{lemma}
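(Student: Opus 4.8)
The plan is to reduce to pairwise‑disjoint pieces and then expand everything by the chain rule, isolating the one place where the independence of the $U_i$'s is actually used. First I would set $C=\mathcal{G}_1\cap\mathcal{G}_2$, $A=\mathcal{G}_1\setminus\mathcal{G}_2$, $B=\mathcal{G}_2\setminus\mathcal{G}_1$, so that $A,B,C$ are pairwise disjoint with $\mathcal{G}_1=A\cup C$, $\mathcal{G}_2=B\cup C$, $\mathcal{G}_1\cup\mathcal{G}_2=A\cup B\cup C$, $\mathcal{G}_1\cap\mathcal{G}_2=C$. Writing $V_S=(V_i:i\in S)$ and $U_S=(U_i:i\in S)$, the claimed inequality becomes
\[
H(V_AV_C|U_AU_C)+H(V_BV_C|U_BU_C)\geq H(V_AV_BV_C|U_AU_BU_C)+H(V_C|U_C).
\]

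Next I would expand each of the four conditional entropies with the chain rule, always peeling off $V_C$ first, then $V_A$, then $V_B$, and collect the terms of $\mathrm{LHS}-\mathrm{RHS}$ into three groups according to which of $V_A,V_B,V_C$ they carry. The $V_A$‑group is $H(V_A|V_CU_AU_C)-H(V_A|V_CU_AU_BU_C)=I(V_A;U_B|V_CU_AU_C)\geq 0$, and the $V_B$‑group is $H(V_B|V_CU_BU_C)-H(V_B|V_AV_CU_AU_BU_C)=I(V_B;V_AU_A|V_CU_BU_C)\geq 0$; both are nonnegative purely because conditioning on a larger set cannot increase entropy, and neither uses any hypothesis on the $U_i$'s.

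The crux — and the only place independence enters — is the $V_C$‑group, which collapses to
\[
H(V_C|U_AU_C)+H(V_C|U_BU_C)-H(V_C|U_AU_BU_C)-H(V_C|U_C)=I(V_C;U_B|U_AU_C)-I(V_C;U_B|U_C).
\]
Here I would invoke the mutual independence of the $U_i$'s, which gives $U_B\perp(U_A,U_C)$, hence $H(U_B|U_AU_C)=H(U_B)=H(U_B|U_C)$; substituting these into the two mutual‑information terms, the displayed quantity equals $H(U_B|V_CU_C)-H(U_B|V_CU_AU_C)\geq 0$, once more by monotonicity of conditional entropy. Summing the three nonnegative groups yields the submodularity inequality. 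I expect this last step to be the only subtle point, since it is where the structural hypothesis is genuinely needed (without it $\mathcal{S}\mapsto H(V_C|U_\mathcal{S})$ need not be submodular); everything else is routine bookkeeping with the chain rule and the "more conditioning helps" inequality.
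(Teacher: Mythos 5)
Your proof is correct, and it takes a genuinely different (more elementary) route than the paper's. The paper expands $H_{V|U}(\mathcal{G})=H(V_i,U_i,i\in\mathcal{G})-H(U_i,i\in\mathcal{G})$, invokes the standard submodularity of the unconditional joint-entropy function for the first term, and observes that mutual independence of the $U_i$'s makes the second term modular (additive), i.e.
\begin{align}
H(U_i,i\in\mathcal{G}_1)+H(U_i,i\in\mathcal{G}_2)=H(U_i,i\in\mathcal{G}_1\cup\mathcal{G}_2)+H(U_i,i\in\mathcal{G}_1\cap\mathcal{G}_2),\notag
\end{align}
so that submodular minus modular is again submodular. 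You instead decompose $\mathcal{G}_1,\mathcal{G}_2$ into disjoint pieces $A,B,C$ and prove the inequality from scratch by a chain-rule expansion, splitting $\mathrm{LHS}-\mathrm{RHS}$ into a $V_A$-group, a $V_B$-group, and a $V_C$-group. I checked the algebra: the $V_A$-group is $I(V_A;U_B|V_CU_AU_C)\geq0$, the $V_B$-group is $I(V_B;V_AU_A|V_CU_BU_C)\geq0$, and after substituting $H(U_B|U_AU_C)=H(U_B)=H(U_B|U_C)$ the $V_C$-group becomes $I(U_A;U_B|V_CU_C)\geq0$; all three reductions are right, including the sign bookkeeping. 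The paper's route is shorter given the textbook fact it cites; yours is self-contained, avoids invoking any entropy-function submodularity result, and has the pedagogical advantage of isolating the single step where independence of the $U_i$'s is actually load-bearing (the $V_C$-group), while the other two groups follow purely from "conditioning reduces entropy." Both proofs are valid.
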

\begin{proof}
Notice that
\begin{align}
&H_{V|U}(\mathcal{G}_1)+H_{V|U}(\mathcal{G}_2)\notag\\
&=H(V_i,U_i,i\in \mathcal{G}_1)+H(V_i,U_i,i\in \mathcal{G}_2)\notag\\
&\quad-H(U_i,i\in \mathcal{G}_1)-H(U_i,i\in \mathcal{G}_2),
\end{align}
and 
\begin{align}
&H_{V|U}(\mathcal{G}_1\cup \mathcal{G}_2)+H_{V|U}(\mathcal{G}_1\cap \mathcal{G}_2)\notag\\
&=H(V_i,U_i,i\in \mathcal{G}_1\cup \mathcal{G}_2)+H(V_i,U_i,i\in \mathcal{G}_1\cap \mathcal{G}_2)\notag\\
&\quad-H(U_i,i\in \mathcal{G}_1\cup \mathcal{G}_2)-H(U_i,i\in \mathcal{G}_1\cap \mathcal{G}_2).
\end{align}
The mutual independence among $U_i$'s gives
\begin{align}
&H(U_i,i\in \mathcal{G}_1)+H(U_i,i\in \mathcal{G}_2)\notag\\
&=H(U_i,i\in \mathcal{G}_1\cup \mathcal{G}_2)+H(U_i,i\in \mathcal{G}_1\cap \mathcal{G}_2).
\end{align}
The submodularity of unconditioned entropy function of random variables is well-known \cite{Fujishige}, which gives
\begin{align}
&H(V_i,U_i,i\in \mathcal{G}_1)+H(V_i,U_i,i\in \mathcal{G}_2)\notag\\
&\geq H(V_i,U_i,i\in \mathcal{G}_1\cup \mathcal{G}_2)+H(V_i,U_i,i\in \mathcal{G}_1\cap \mathcal{G}_2)
\end{align}
and the proof is thus complete.
\end{proof}

\section{Proof of the Lemmas}
\label{appendix:lemmas}

\begin{proof}[Proof of Lemma \ref{lemma:increasingijk}]
\begin{align}
&\phi_{i,j}\phi_{j,k}=\binom{K-i}{j-i}^{-1}\binom{j-1}{j-i} \binom{K-j}{k-j}^{-1}\binom{k-1}{k-j}\notag\\
&\quad=\frac{(j-i)!(K-j)!(j-1)!}{(K-i)! (j-i)!(i-1)!}\frac{(k-j)!(K-k)!(k-1)!}{(K-j)!(k-j)!(j-1)!}\notag\\
&\quad=\frac{(K-k)!(k-1)!}{(K-i)!(i-1)!}\frac{(k-i)!}{(k-i)!}=\phi_{i,k}.
\end{align}
\end{proof}

\begin{proof}[Proof of Lemma \ref{lemma:decreasingijk}]
\begin{align}
&\phi_{k,j}\phi_{j,i}=\binom{k}{k-j}^{-1}\binom{K-j}{k-j} \binom{j}{j-i}^{-1}\binom{K-i}{j-i} \notag\\
&\quad=\frac{(k-j)!j!(K-j)!}{k!(k-j)!(K-k)!} \frac{(j-i)!i!(K-i)!}{j!(j-i)!(K-j)!}\notag\\
&\quad=\frac{i!(K-i)!}{k!(K-k)!}\frac{(k-i)!}{(k-i)!}=\phi_{k,i}.
\end{align}
\end{proof}
\begin{proof}[Proof of Lemma \ref{lemma:circleij}]
By the definition of $\phi_{i,j}$, it is easy to verify that 
\begin{align}
&\phi_{i,j}\phi_{j,i}=\frac{i}{j}<1.
\end{align}
\end{proof}
\begin{proof}[Proof of Lemma \ref{lemma:twostep}]
The case $i=k$ is exactly Lemma \ref{lemma:circleij}, thus we only need to consider the case $i\neq k$; we may also assume $j\neq i$ and $j\neq k$ since these cases are trivial. The order of $i,j,k$ can be arbitrary, but since the proof only relies on Lemma \ref{lemma:increasingijk}, \ref{lemma:decreasingijk} and \ref{lemma:circleij}, we may assume without loss of generality $i<j$. Thus we have the only three cases. (1) $k<i<j$:  by Lemma \ref{lemma:decreasingijk} and \ref{lemma:circleij}, we have $\phi_{i,k}>\phi_{i,k}\phi_{i,j}\phi_{j,i}=\phi_{i,j}\phi_{j,k}$. (2) $i<k<j$: by Lemma \ref{lemma:increasingijk} and \ref{lemma:circleij}, we have $\phi_{i,k}>\phi_{i,k}\phi_{k,j}\phi_{j,k}=\phi_{i,j}\phi_{j,k}$. (3) $i<j<k$: the equality is implied by Lemma \ref{lemma:increasingijk}.

\end{proof}

\begin{proof}[Proof of Lemma \ref{lemma:consequtivedown}]
\begin{align}
&(k+1)\binom{K-1}{k-1}\phi_{k+1,j}\notag\\
&=\frac{(k+1)(K-1)!(K-j)!(k+1-j)!j!}{(k-1)!(K-k)!(k+1-j)!(K-k-1)!(k+1)!}\notag\\
&=\frac{(K-1)!(K-j)!j!}{(k-1)!(K-k)!(K-k-1)!k!}.
\end{align}
Similarly, we have
\begin{align}
k\binom{K-1}{k}\phi_{k,j}&=\frac{k(K-1)!(K-j)!(k-j)!j!}{k!(K-k-1)!(k-j)!(K-k)!k!}\notag\\
&=\frac{(K-1)!(K-j)!j!}{(k-1)!(K-k)!(K-k-1)!k!},
\end{align}
proving the lemma.
\end{proof}

\begin{proof}[Proof of Lemma \ref{lemma:alternativephijk}]
We only need to write the following
\begin{align}
\binom{K-1}{i-1}{\binom{K-1}{j-1}}^{-1}
&=\frac{(K-1)!(j-1)!(K-j)!}{(i-1)!(K-i)!(K-1)!}\notag\\
&=\frac{(j-1)!}{(i-1)!(j-i)!}\frac{(K-j)!(j-i)!}{(K-i)!}\notag\\
&=\binom{j-1}{j-i}{\binom{K-i}{j-i}}^{-1}=\phi_{i,j}.
\end{align}
\end{proof}


\begin{proof}[Proof of Lemma \ref{lemma:extremal}]
Suppose an arbitrary optimal solution of the maximization problem (\ref{eqn:maximizingBstar}) is given, we shall next transform it into an extremal one which is also optimal. 

For condition (\textsf{i}), we may assume $R^*_i>0$ because otherwise the statement is trivial. Observe that for any optimal solution, the second inequality in (\ref{eqn:condition2}) must hold with equality, because otherwise the quantity being maximized can strictly increase.
First suppose for certain $i$, there exist distinct $j_1,j_2$ such that $r_{i,j_1}>0$ and $r_{i,j_2}>0$, then we must have
\begin{align}
A_{j_1}\phi_{i,j_1}r_{i,j_1}=A_{j_2}\phi_{i,j_2}r_{i,j_2},\label{eqn:mustbeequal}
\end{align}
because otherwise, e.g., if $<$ held, then letting $r'_{i,j_2}=r_{i,j_1}+r_{i,j_2}$ and $r'_{i,j_1}=0$ strictly increases the quantity being maximized in (\ref{eqn:maximizingBstar}). However, if (\ref{eqn:mustbeequal}) is true, the new solution given above does not decrease the quantity being maximized, thus a new solution can be found such that there exist no such two distinct $j_1,j_2$. Given this is true, it is clear that for each $i$, letting the unique $j$ for which $r_{i,j}>0$ be $R^*_i$ is an optimal choice. Thus condition (\textsf{i}) is satisfied by some optimal solution, and from here on, we shall only consider such solutions.

For condition (\textsf{ii}), we may assume $R^*_j>0$ since otherwise the statement is trivial. 
Suppose condition (\textsf{ii}) is not true, i.e., for some $k\neq j$, $r_{j,k}=R^*_j$, then the optimality of the solution implies 
\begin{align}
\label{eqn:jtok}
A_j\leq A_k\phi_{j,k},
\end{align}
Now we claim that the new solution with $r'_{i,k}=R^*_i$ and $r'_{i,j}=0$ (with other $r_{i,j}$ values unchanged) can not decrease the quantity being maximized. To see this, we only need to observe that
\begin{align}
A_k\phi_{i,k}R^*_i\geq A_k\phi_{i,j}\phi_{j,k}R^*_i \geq A_j\phi_{i,j}R^*_i,
\end{align}
which is by Lemma \ref{lemma:twostep} and (\ref{eqn:jtok}). Thus the conditions (\textsf{i}) and (\textsf{ii}) are indeed satisfied by some optimal solution, and from here on we shall only consider such solutions.

For condition (\textsf{iii}), we only discuss the case $i<j$, because the other case $i>j$ is similar. The fact that $r_{i,j}>0$ implies $A_iR^*_i\leq A_j\phi_{i,j}R^*_i$. We may assume $R^*_k>0$ because otherwise the statement is trivial. 
Take an arbitrary $k$, such that $i<k<j$, we may have $r_{k,j'}=R^*_k$ for some $j'$, and the value of $j'$ may be $j'<i$, $i<j'<k$ or $ j'\leq k$; note that we can assume $j'\neq i$ since condition (\textsf{i}) afore-proved.
It is easy to see that we must have $A_j>0$ and $A_{j'}>0$. The fact that $r_{i,j}>0$ and $r_{k,j'}>0$ imply that
\begin{align}
\label{eqn:twoorders}
A_j\phi_{i,j}\geq A_{j'}\phi_{i,j'},\quad\mbox{and} \quad A_{j'}\phi_{k,j'}\geq A_{j}\phi_{k,j}.
\end{align}
The three cases are now discussed individually next. Case (1) $j'<i$, from (\ref{eqn:twoorders}) and Lemma \ref{lemma:increasingijk} and \ref{lemma:decreasingijk}, we have
\begin{align}
A_j\phi_{i,k}\phi_{k,j}\geq A_{j'}\phi_{i,j'},\quad\mbox{and} \quad A_{j'}\phi_{k,i}\phi_{i,j'}\geq A_{j}\phi_{k,j},
\end{align}
which lead to $\phi_{i,k}\phi_{k,i}\geq 1$, contradicting Lemma \ref{lemma:circleij}, thus this is an impossible case. Case (2) $i<j'<k$, from (\ref{eqn:twoorders}) and Lemma \ref{lemma:increasingijk}, we have 
\begin{align}
A_j\phi_{i,j'}\phi_{j',k}\phi_{k,j}\geq A_{j'}\phi_{i,j'},\quad\mbox{and} \quad A_{j'}\phi_{k,j'}\geq A_{j}\phi_{k,j},
\end{align}
which lead to $\phi_{j',k}\phi_{k,j'}\geq 1$, thus this is another impossible case. 
Case (3) $j'\geq k$, from (\ref{eqn:twoorders}) and Lemma \ref{lemma:increasingijk}, we have that for this case
\begin{align}
A_j\phi_{i,k}\phi_{k,j}\geq A_{j'}\phi_{i,k}\phi_{k,j'}
\end{align}
thus the new solution that $r'_{k,j}=r_{k,j'}$ and $r'_{k,j'}=0$ does not decrease the quantity being optimized. Thus the conditions (\textsf{i}), (\textsf{ii}) and (\textsf{iii}) are indeed satisfied simultaneously by some optimal solution. The lemma is proved.
\end{proof}

\bibliographystyle{IEEEbib}

\begin{thebibliography}{11}

\bibitem{GrokopTse:08}
L. Grokop and D.N.C. Tse,
\newblock ``Fundamental constraints on multicast capacity regions,''
\newblock preprint, arXiv:0809.2835v1.

\bibitem{TianDiggavi:07}
C.~Tian and S.~Diggavi, ``On multistage successive refinement for {W}yner-{Z}iv
  source coding with degraded side information,'' \emph{IEEE Trans. Information Theory}, vol.~53, no.~8, pp.
  2946--2960, Aug. 2007.
  
\bibitem{SteinbergMerhav:04}
Y.~Steinberg and N.~Merhav, ``On successive refinement for the {Wyner-Ziv}
  problem,'' \emph{IEEE Trans. Information Theory}, vol.~50, no.~8, pp.
  1636--1654, Aug. 2004.

\bibitem{TianDiggavi:08}
C.~Tian and S.~Diggavi, ``Side-information scalable source coding,'' \emph{IEEE Trans. Information Theory}, vol.~54, no.~12, pp.
  5591--5608, Dec. 2008.
  

\bibitem{Harvey:06}
H. J. A. Harvey, R. Kleinberg, and A. R. Lehman, ``On the capacity of information networks,'' \emph{IEEE Trans. Information Theory}, vol.~52, no.~6, pp. 2345--2364, Jun. 2006.

\bibitem{Raymond:2000}
R. Ahlswede, N. Cai, S.-Y. R. Li and R. W. Yeung, ``Network information flow,'' \emph{IEEE Trans. on Information Theory}, vol. 46, pp. 1004-1016, Jul. 2000.


\bibitem{Ziegler}
G. M. Ziegler,
\newblock {\em Lectures on Polytopes, volume 152 of Graduate Texts in Mathematics},
\newblock Springer-Verlag, 1995.


\bibitem{Fujishige}
M. S. Fujishige,
\newblock {\em Submodular functions and optimization, annals of discrete mathematics 47},
\newblock Elsevier Science Publishing Company, 1991.


\bibitem{HanKobayashi:81}
T. S. Han, and K. Kobayashi, ``A new achievable rate region for the interference channel,'' \emph{IEEE Trans. Information Theory}, vol.~27, no.~1, pp. 49--60, Jan. 1981.

\end{thebibliography}

\end{document}